\newtheorem{theorem}{Theorem}
\newtheorem{lemma}[theorem]{Lemma}
\newtheorem{proposition}[theorem]{Proposition}
\newtheorem{corollary}[theorem]{Corollary}
\newtheorem{example}{Example}
\newtheorem{definition}{Definition}
\newtheorem{claim}{Claim}
\newcommand{\ignore}[1]{}
\newcommand{\ga}{\gamma}
\newcommand{\polylog}{{\rm polylog}}
\def\squarebox#1{\hbox to #1{\hfill\vbox to #1{\vfill}}}
\newcommand{\qed}{\hspace*{\fill}
\vbox{\hrule\hbox{\vrule\squarebox{.667em}\vrule}\hrule}\smallskip}
\newenvironment{proof}{\noindent{\bf Proof:~~}}{\(\qed\)}
\date{October 2010}
\begin{document}

\title{Matching with Couples Revisited}

\author{Itai Ashlagi \and Mark Braverman \and Avinatan Hassidim\thanks{Ashlagi: Sloan School of Management, MIT, iashlagi@mit.edu. Braverman: Department of CS, University of Toronto, mbravem@cs.toronto.edu. Hassidim: Google, Israel, avinatanh@gmail.com. We thank Shahar Dobzinski, Itay Fainmesser, John Hatfield, Jacob Leshno, Brendan Lucier and Scott Kominers for very helpful discussions and comments.}}

\maketitle

\begin{abstract}
It is well known that a stable matching in a many-to-one matching market with couples need not  exist.
We  introduce a new matching algorithm for such markets and show that for a general class of large  random markets the algorithm will find a stable matching with high probability. In particular we allow the number of couples to grow at a near-linear rate. Furthermore, truth-telling is an approximated equilibrium in the game induced by the new matching algorithm.
Our results are tight: for markets in which the number of couples grows at a linear rate, we show that
with constant probability  no stable matching exists.
 \end{abstract}

\section{Introduction}
\label{sectionIntroduction}

We consider a many-to-one matching market, in which one side of the market consists of hospitals and the other consists of doctors.  Stability is the  most natural and desired property in such markets.  Therefore understanding  when a stable matching exists in a matching market with couples as well as providing  an efficient procedure to find one (whenever it exists) are both important tasks, and are the main scope of this paper.

 \citet{GaleShapley} introduced the well-known Deferred Acceptance algorithm and showed that if doctors' preferences do not depend on other doctors' preferences, in other words all doctors are ``single", the algorithm will always produce a stable matching. Naturally, when couples are present in the market,  their preferences  depend on each other and often introduce complementarities,  a stable matching may not exist (\citet{Roth1984}). In fact, for any market size one can construct a preference profile for which a stable matching does not exist and even if a stable matching does exist, finding it can be computationally intractable (\cite{Ronn}).

Several clearinghouses exist today for two sided markets with couples. Two major examples are the National Resident Matching Program (NRMP)  and the
clearinghouse for psychology interns. Until not long ago  couples had to participate as singles,  since  clearinghouses for these markets used the
Deferred Acceptance algorithm to find a matching.  Only since 1999, the NRMP  and the psychology market  adopted the  new algorithm  designed
by  \citet{RothPeranson}  allowing for couples to express their joint preferences, henceforth called the Roth-Peranson (RP) algorithm.
 This algorithm has had a great success in practice: every year since it is used, the NRMP has found a stable matching with respect to
 the reported preferences. For a comprehensive background, and history of these markets see \citet{RothCouples,RothCouplesHistory}.


 \citet{KlausCouples1}\footnote{See also \citet{KlausCouples2}.} initiated the study of characterizing  markets with couples  that have a stable matching. They showed that the  domain of {\it responsive preferences} is a maximal domain in which a stable matching exists. However,  \citet{RothCouples} observe from real data that  couples' preferences often do not belong to this domain. Adopting a random preferences approach, they used a much simplified version of RP (that attempts to find a stable matching) to show that if there are $n$ single doctors,
 and the number of couples is of order $\sqrt{n}$ a stable matching exists with probability converging to one as $n$ approaches  infinity.
 The approach for studying random growing markets is well founded.\footnote{\citet{NicoleMahdian} and \citet{KojimaParag} also used a similar large market approach to study incentives and stability  in a one-to-one and many-to-one matching markets without couples.} About  16,000 single doctors and  800 couples participated in the NRMP in 2010\footnote{In fact there were about 40,000 doctors, but only 16,000 of from American institutions and most couples were from American institutions.}, and about 3,000 single doctors and 19 couples participated in the psychology clearinghouse in the same year. Furthermore these figures are increasing every year. While the size of the market justifies the large-market-assumption, the number of couples increases every year. Thus  although  results by \citet{RothCouples} can explain the success in the psychology  market,  the success of the NRMP market remains a puzzle as the the number of couples is much larger than $\sqrt{n}$.

We  introduce  a new matching algorithm, called {\it Sorted Deferred Acceptance} (SoDA),
for many-to-one matching markets with couples. Our approach is slightly different from RP, although the algorithms share similar ideas.
The  SoDA algorithm is simple and consists of two main steps; First
(i)  it finds a  stable matching in the sub-market without couples using Deferred Acceptance. Then (ii) in some given {\it order}, each couple $c$
applies according to its preference list; whenever a single is rejected it applies until it finds a position.
If  some other couple $c'$  has been  rejected after being assigned, the second step starts over,  letting however $c'$ apply just ahead of $c$.

As noted above we study large markets  and analyze the  performance of the SoDA algorithm in these markets.   In our model, hospitals have capacities, there is  an excess number of available positions\footnote{There are $\lambda n$ positions for some $\lambda>1$.}, all doctors are acceptable to all hospitals and vice versa,  doctors preferences are random and hospital preferences are arbitrary. As we will show, when doctors' preference lists are long, an excess number of positions is  necessary for the existence of stability even if there is only one couple in the market.\footnote{\citet{RothCouples} do not assume  an excess number of positions.
They assume, however, that doctors have `short' preference lists, and show that it results in an excess number of positions.}
All our main results hold without restricting the length of doctors' preference lists.

We first provide positive results for a near-linear rate.
If the number of couples grows at a rate of at most $n^{1-\epsilon(n)}$ where $\epsilon(n)$ is a `slowly' decreasing function converging  to zero:\footnote{$\epsilon(n)$ can be replaced by any fixed $\epsilon>0$.}
\begin{enumerate}
\item[1.] The probability that a stable matching exists and is found by the SoDA algorithm approaches 1 (as $n$ approaches infinity).

\item[2.]  The probability that any doctor or any couple can gain by misreporting her preferences converges to 0 even ex post. A similar result can be shown for hospitals, implying that truth-telling is an approximated Bayes Nash equilibrium in the game induced by SoDA for any large enough $n$.
\end{enumerate}
Note that if $\epsilon(n)$ is approximately $1/\log n$ then the growth rate of couples is linear. Our result holds for any $\epsilon(n)=\Omega(\log\log n/\sqrt{\log n})$
(see the last section for further discussion).\footnote{We write $f(n)=\Omega(g(n))$ if there exists $c>0$ such that $f(n)\geq cg(n)$ for every large enough $n$.}


Our first result is tight in the following sense. When the number of couples grows at a rate of $\alpha n$ for some $\alpha>0$ we show:
\begin{enumerate}
\item[3.]
For some $\lambda>2\alpha+1$, if the number of hospitals is $\lambda n$, with constant probability (not depending on n) no stable matching exists (even if hospitals' preferences are random).

\end{enumerate}
While the third result does not cover the case when the excess is small, we believe that the result remains true in this case.
We  give evidence, based on simulation, that in the setting where the number of couples is linear, the probability of failure decreases as $\alpha$ decreases but remains constant as $n$ grows.
One can thus view our results as a characterization for the existence of a stable matching with high probability in a large random market with couples.

We  also show that  the SoDA algorithm runs in polynomial time (in fact `almost' linear),  and  provide simulations that test SoDA in various large random markets. Finally, we believe our proof technique is interesting for its own sake, and may serve as a tool for in the search for positive results in  other settings  with complementarities. Some open problems are discussed in the last section.

SoDA is the first algorithm for matching markets with couples that is proven to find a stable outcome under very general settings.
The provable success of the SoDA algorithm
helps explain the fact that algorithms, RP in particular, have been successful in finding stable matchings in real life.\footnote{We believe that our techniques can be adapted to prove that directly that the RP algorithm
also succeeds with high probability in large random markets.}
This adds to the short list of positive results in settings with complementarities (see e.g.  \citet{Milgrom04},  \citet{Gul99}, \citet{SunYang} and \citet{ParkesLahaie} for auction settings, and \citet{Kominers1} and \citet{Pycia1} for matching settings).

\section{Matching Markets with Couples}
\subsection{Model}
\label{model}

In a matching market there is a set of hospitals $H$ a set of single doctors $S$ and a set of couples of doctors $C$. Each single doctor $s\in S$ has a
strict  preference relation $\succ_s$ over the set of hospitals.
Each couple $c\in C$ denoted by $c=(f,m)$  has a strict  preference relation
$\succ_c$ over pairs of hospitals. For every couple $c$ we denote by $f_c$ and $m_c$ the first and second  members of $c$. Denote by $D$ the set of all doctors. That is  $D=S\cup \{m_c| c\in C \} \cup \{f_c| c\in C\} $.
Each hospital $h\in H$ has a fixed capacity $k_h>0$ and a strict preference relation $\succ_h$ over the set $D$. For any set $D'\subseteq D$ hospital $h$'s {\bf choice} given $D'$, i.e. the most preferred doctors $h$ can employ, $CH_h(D')$, is induced by $\succ_h$ and $k_h$ as follows: $d\in D'\cap CH_h(D')$ if and only if there exist no set of $D''\subseteq D'\setminus \{d\}$ such that $|D''|=k_h$ and $d'\succ_h d$ for all $d'\in D''$.

A {\bf matching} $\mu$ is a function from $H\cup C\cup S$ such that  $\mu(s)\in H\cup \{\phi\}$ for every $s\in S$,
$\mu(c)\in H\times H\cup \{(\phi,\phi)\}$ for every $c\in C$,  $\mu(h)\in 2^D$ for every $h\in H$, and:
\begin{enumerate}
\item[(i)]  $s\in \mu(h)$  if and only if $\mu(s)=h$.
\item[(ii)]   $\mu(c)=(h,h')$ if and  only if $f_c\in\mu(h)$ and $m_c\in\mu(h')$.
\end{enumerate}
$\mu(s)=\phi$ means that $s$ is unassigned under $\mu$, and similarly $\mu(c)=(\phi,\phi)$ means that the couple $c$ is unassigned under $\mu$.





We proceed to define stability.
Blocking coalitions for a given matching can be formed in several ways:
\begin{itemize}
\item  $(s,h)\in S\times H$ is a {\bf block}  of $\mu$ if $h\succ_s\mu(s)$ and $s\in CH_h(\mu(h)\cup s)$.
\item  $(c,h,h')\in C\times H\times H$ (where  $h\neq h'$) is a {\bf block} of $\mu$ if  $(h,h')\succ_c\mu(c)$,   $f_c\in Ch_h(\mu(h)\cup f_c)$, and  $m_c\in Ch_{h'}(\mu(h')\cup m_c)$
    \item  $(c,h)\in C\times H$ is a {\bf block} of $\mu$ if $(h,h)\succ_c\mu(c)$ and $\{f_c,m_c\}\in Ch_h(\mu(h)\cup c)$.
\end{itemize}
Finally a matching is {\bf stable} if there is no block of $\mu$.

Gale and Shapley (1962)  showed that the (doctor proposing) Deferred Acceptance algorithm described below,  always produces a stable matching in a matching market without couples.  They further showed that the stable matching produced by this algorithm is the one that is weakly preferred by all single doctors. Roth~(1982) showed that the mechanism induced by this algorithm makes it a dominant strategy for all single doctors to report their true preferences.

\noindent{\bf Doctor-Proposing Deferred Acceptance Algorithm (DA):}
\paragraph{}

{\it \noindent\emph{Input:} a matching market $(H, S, \succ_H, \succ_S)$ without couples.
\paragraph{}

\noindent \emph{Step 1:} Each single doctor$s\in S$ applies to her most preferred hospital. Each hospital rejects its
least preferred doctor in excess of its capacity
among those who applied to it, keeping the rest of the doctors temporarily.

\paragraph{}
\noindent \emph{Step t:} Each doctor who was rejected in Step t-1 applies to her next highest  choice if such exists.
Each hospital considers these doctors as well as the doctors who are temporarily held from the
previous step, and rejects the least-preferred doctors in excess of its capacity
keeping the rest of the doctors temporarily.

\paragraph{}
\noindent The algorithm terminates at a step where no doctor is rejected.}

\paragraph{}

In the next section we introduce a new algorithm for finding a matching in a market with couples. Roth~(1984) showed that when there are couples, sometimes a stable match does not exist. In Section \ref{sec:stability} we show that this algorithm produces a stable matching with very high probability when there is a large market with the number of couples growing (almost) linearly.

\subsection{A New Matching Algorithm}

The matching algorithm  that we present here first finds the stable matching in the market without couples (using DA)
and then attempts to insert the couples, while maintaining the deferred acceptance idea.

Informally, the new algorithm receives as input a matching market with couples and does the following:

{\it
\begin{enumerate}
\item[\emph{(i)}] Find the stable matching in the sub-market without couples using the DA algorithm.
\item[\emph{(ii)}] Fix an order $\pi$ over the couples. Let each couple $c$ on its turn according to $\pi$ apply to pairs of hospitals according to its preference list $\succ_c$ (beginning with the most preferred) and once it finds a pair of hospitals that accepts it, we assign the couple to the pair of hospitals and {\emph stabilize} the current matching as follows:
      \begin{enumerate}
      \item[\emph{Stabilize:}] Continue the DA algorithm, with the singles that were rejected from the their positions in the pair of hospitals that the last couple $c$ was assigned to (at most two singles).

          If during stabilizing one of the members of the last couple $c$ was rejected the algorithm fails. Otherwise if some other couple $c'\neq c$ was rejected during stabilizing, the order $\pi$ is changed so that $c$ is moved one place ahead of $c'$ and part (ii) begins again with the altered permutation;  If the new order $\pi'$ has been tried previously the algorithm fails.
      \end{enumerate}
\end{enumerate}}

Note that if the algorithm terminates without failure it produces a stable matching. As mentioned  in the previous section, this algorithm will serve as a main tool in showing that there exist a stable matching in a large random market.

\citet{RothCouples} used  a similar algorithm  but allows couples to apply in one order, i.e. if some couple is evicted their algorithm fails, even though {\it there might  be a different order  of couples' applications which will not lead to such a failure.}
In our  algorithm if some couple has been rejected the algorithm allows couples to a apply again using a different ordering.

We next describe our algorithm formally.

\noindent{\bf Sorted Deferred Acceptance Algorithm (SoDA):}

{\it
\noindent \emph{Input:} A matching market $(H, S, C, \succ_S, \succ_H, \succ_C)$ and a default permutation $\pi$ over  the set $\{1,2,\ldots, |C|\}$. Let $\Pi=\phi$.

\noindent \emph{Step 1:} Find the stable matching $\mu$  produced by the DA algorithm in the matching market $(H,S,\succ_S,\succ_H)$ without couples.

\noindent  \emph{Step 2 [Iterate through the couples]:} Let $i=1$ and let $B=\phi$.

\begin{enumerate}
\item[(a)] Let $c=c_{\pi(i)}$ be the $\pi(i)$-{th} couple.

Let $c$ apply to the most preferred pair of hospitals $(h,h')\in H\times H$ that has not rejected it yet. If such a pair of hospitals does not exist, modify $\mu$ such that $c=(f,m)$ is unassigned and go to step 2(a) with $i+1$.
    If such a pair $(h,h')$ exists then:

\begin{enumerate}
\item[(a1)] If $h=h'$ and $\{f,m\}\subseteq Ch_h(\mu(h)\cup c)$ then:

        Let $R = \mu(h)\setminus Ch_h(\mu(h)\cup c)$ be the rejected doctors from $h$.
        \begin{enumerate}
        \item[(a11)]
        If there exist a couple $c'\neq c$  for which $\{f_{c'},m_{c'}\}\cap R$ then:
        Let $j<i$ be such that $c_{\pi(j)}=c'$. Let $\pi'$ be the permutation obtained by $\pi$ as follows:

        $\pi'(j)=\pi(i)$, $\pi'(l)=\pi(l)$ for all $l$  such that $l<j$ or $l>i$ and $\pi'(l)>\pi(l-1)$  for other $j+1\leq l\leq i$.

        If $\pi'\in \Pi$ terminate the algorithm. Otherwise add $\pi'$ to $\Pi$ and go to Step 1 setting $\pi=\pi'$.

        \item[(a12)]
        Modify $\mu$ by assigning $c$ to $h$,   remove $R$ from $\mu(h)$. Add $R$ to $B$ and do Step 3 \emph{(Stablize)} with the couple $c$.
        \end{enumerate}

\item[(a2)]  If $h\neq h'$, $f\in Ch_h(\mu(h)\cup f)$, and $m\in Ch_{h'}(\mu(h)\cup m)$ then:

        Let $R_h = \mu(h)\setminus Ch_h(\mu(h)\cup \{f\})$ and $R_{h'} = \mu(h')\setminus Ch_{h'}(\mu(h')\cup \{m\})$.
         \begin{enumerate}
        \item[(a21)]
        If there exist a couple $c'\neq c$ for which $\{f_{c'},m_{c'}\}\cap (R_h\cup R_{h'})$ then:
        Let $j<i$ be such that $c_{\pi(j)}=c'$, change $\pi$ as in step 2(a11). If $\pi\in \Pi$ terminate the algorithm. Otherwise add $\pi$ to $\Pi$ and go to Step 1.
        \item[(a22)]
        Modify $\mu$ by assigning $f$ to $h$ and $m$ to $h'$, remove $R_h$ from $\mu(h)$ and  remove $R_{h'}$ from $\mu(h')$. Add $R_h\cup R_{h'}$ to $B$ and go to Step 3 \emph{(Stablize)} with the couple $c$.
        \end{enumerate}

\item[(a3)] Otherwise, let $h$ and $h'$ reject the couple $c$ and go to Step 2(a).

\end{enumerate}

\end{enumerate}

\noindent \emph{Step 3 [Stabilize]:}  Let $j=|B|$.   As long as $j\geq 0$:

\begin{enumerate}
\item[(a)] If $j=0$ increment  $i$ by one and got to Step 2.
\item[(b)] Otherwise pick some $s\in B$ and:
\begin{enumerate}
\item[(b1)] Let $h$ be the most preferred hospital $s$ has yet to apply to. If such a hospital does not exist then modify the matching $\mu$ such that $s$ is unassigned and go to Step $2(a)$. Otherwise:

Let $R = (\mu(h)\cup \{s\})\setminus Ch_{h}(\mu(h)\cup \{s\})$.

\begin{enumerate}
\item[(b21)]
If  $\{f_{c},m_{c}\}\cap R$ then the algorithm fails.
\item[(b22)]
If there exist a couple $c'\neq c$ for which $\{f_{c}',m_{c}'\}\cap R$ then let $i$ and $j$ be such that $c_{\pi(i)}=c$ ($c$ is the last couple that applied) and $c_{\pi(j)}= c'$.
        Change $\pi$ as in Step 2(a11). If $\pi\in \Pi$ terminate the algorithm. Otherwise add $\pi$ to $\Pi$ and go Step 1.

\item[(b23)]
        If $s\in R$ then go to Step 3(b1).
\item[(b24)]
        Modify $\mu$ by assigning $s$ to $h$, remove $R$ from $\mu(h)$. Add $R$ to $B$ and go to Step 3.
\end{enumerate}
\end{enumerate}
\end{enumerate}
}

Observe that the SoDA algorithm fails to produce a matching in two cases: (i) if  a couple $c$ that finds a pair of positions causes a ``chain reaction" leading to the same couple $c$ being rejected (step 3(b21)),  or (ii) it is about to let couples apply in and order that has already been tried before (steps 2(a11), 2(a21) and 3(b22))  (it changes the permutation  $\pi$ to a permutation $\pi'$ that already belongs to $\Pi$). As mentioned above, if the algorithm does not fail it produces a stable matching.

The following definition will be useful throughout the paper.
\begin{definition}[Evicting]
Let $d\in D$ be a doctor and suppose that $d$ is (temporarily) assigned to some hospital $h$. Let $c\in C$. If during the execution of the  SoDA algorithm some member of the couple $c$ who is not assigned to $h$ applies to $h$ and  causes $d$ to be rejected by $h$, we say that $d$ was {\bf evicted} by $c$. Furthermore, if $d$ was evicted by $c$, applies to some hospital $h'$ and causes some other doctor $d'$ who is assigned to $h'$ to be rejected, we also say that $d'$ is evicted by $c$, and so forth. Finally, if $d$ was evicted  by $c$ and $d$ belongs to a couple $c'$ we say that  $c$ was evicted by $c'$. Formally, all doctors in the set $R$ in steps 2(a1), 2(a2) and 3(b2) are evicted by the applying couple $c$.
\end{definition}

{\bf Remark:} According to this definition $c$ can evict itself. Such a phenomenon may occur since one  member of a given couple can evict the other member of the couple (in the algorithm this happens in  part (b21)).


\section{A Large Market Model}

A {\bf random market} is a tuple $\Gamma=(H,S,C,\succeq_H, Z,Q)$ where $Z=(z_h)_{h\in H}$  and $Q=(q_h)_{q\in H}$ are probability distributions over $H$.

The preference list of each single doctor $d\in S$  is independently drawn as follows:  for each $k=1,\ldots,|H|$ given  $s$'s preference list  up to her  $k$-th most preferred hospital, draw independently according to $Z$ a hospital $h$ until $h$ does not appear in $s$'s $k$ most preferred hospitals and let it be $s$'s $(k+1)$-th most preferred hospital.  The preference list for  each couple $c=(f,m)$ is drawn from the distribution $Q\times Q$.

We will assume that the distributions $Z$ and $Q$ are {\it uniformly bounded}, that is there exist $r\geq 1$ such that
$\frac{q_h}{q_{h'}}\in [\frac{1}{r},r]$ and  ~$\frac{z_h}{z_{h'}}\in [\frac{1}{r},r]$ for every $h,h'\in H$. Define  $\gamma_{max}$ to be the maximum probability  that a hospital is drawn either from $Z$ or from $Q$, that is $\gamma_{max}=\max_{h\in H}\max(q_h,z_h)$.


We will consider a  {\bf sequence of random markets} $\Gamma^1,\Gamma^2,\ldots$ where $\Gamma^n=(H^n,S^n,C^n,\succeq_H^n,Z^n,Q^n)$, i.e. markets with a growing size.

\begin{definition}
A sequence of random markets $\Gamma^1,\Gamma^2,\ldots$ is called {\bf regular}
if there exist $0<\epsilon<1$, $\lambda>1$, $c>0$ and $r\geq 1$ such
that for all  $n$
\begin{enumerate}
\item
$|S^n|=n$ and $|C^n|=O(n^{1-\epsilon})$ ~(the number of couples grows almost linearly).

\item for each hospital  $h\in H^n$, $k_h<c$  ~(bounded capacity).

\item $\sum_{h\in H^n}k_h\geq \lambda n$ ~(excess number of positions).
\end{enumerate}
\label{seq-regular}
\end{definition}
Importantly our results are true even if $\epsilon$ is a `slow' decreasing function of $n$ converging to zero. The exact rate is discussed in the last section.

In their model, \cite{RothCouples} assumed that each doctor's preference list is bounded by a constant,  whereas in our setting all hospitals are acceptable to each doctor. A key step in their proof is to show that  the number of unfilled positions grows linearly in $n$ with high probability.
Instead, we assume an excess number of positions. In fact one can show that under long preference lists, an excess number of positions is necessary for the existence of a stable matching even with one couple:\footnote{This is the only result that we require that doctors' preference lists are long - every hospital is acceptable to every doctor.}

\begin{proposition}
  Consider a  matching market with $n-1$ singles, one couple $c$ and $n$ hospitals each of capacity 1.  Then there exist  preferences for hospitals such that no stable matching exists.
  \label{thm:linear2}
\end{proposition}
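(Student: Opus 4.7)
The plan is to exhibit an explicit market with these parameters in which every matching is blocked. The core of the difficulty lives in a small $n=2$ sub-instance (one single, one couple, two hospitals of capacity one); the remaining $n-2$ singles and hospitals are padded with mutually top-ranked pairs that get "frozen" in any stable matching and are thus irrelevant to the instability. I assume $n\geq 2$ (the case $n=1$ has no possible couple assignment and is trivially stable).

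First I would specify the preferences. For each $i\geq 3$, let $s_i$ rank $h_i$ first and let $h_i$ rank $s_i$ first, with the remainder of both lists arbitrary. Let $s_1$'s list begin $h_1\succ h_2\succ\ldots$. Let the couple $c=(f,m)$ have $(h_1,h_2)\succ_c(h_2,h_1)\succ_c\ldots$ with $(\phi,\phi)$ placed below every ordered pair. Set the relevant entries of the active hospitals to $h_1\colon f\succ m\succ s_1\succ\ldots$ and $h_2\colon s_1\succ m\succ f\succ\ldots$. Next I would carry out the reduction: in any stable matching one must have $\mu(s_i)=h_i$ for all $i\geq 3$, since otherwise $(s_i,h_i)$ is a block ($h_i$ is $s_i$'s top choice so $h_i\succ_{s_i}\mu(s_i)$, and $s_i$ is $h_i$'s top choice so $s_i\in Ch_{h_i}(\mu(h_i)\cup\{s_i\})$). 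By the same block, if either member of $c$ occupies $h_i$ with $i\geq 3$ then $s_i$ is displaced and the matching is unstable. Hence every stable candidate confines $\{s_1,f,m\}$ to $\{h_1,h_2,\phi\}$ with $\mu(c)\in\{(h_1,h_2),(h_2,h_1),(\phi,\phi)\}$.

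Finally I would enumerate the five residual patterns and exhibit a blocking coalition for each. If $\mu(c)=(h_1,h_2)$ and $s_1$ is unmatched, then $(s_1,h_2)$ blocks since $s_1\succ_{h_2}m$. If $\mu(c)=(h_2,h_1)$ and $s_1$ is unmatched, then $(c,h_1,h_2)$ blocks since $f\succ_{h_1}m$ and $m\succ_{h_2}f$ and $(h_1,h_2)\succ_c(h_2,h_1)$. If $c$ is unmatched and $\mu(s_1)=h_1$, then $(c,h_1,h_2)$ blocks since $f\succ_{h_1}s_1$ and $h_2$ is empty. In the remaining two sub-cases ($c$ unmatched with $\mu(s_1)\in\{h_2,\phi\}$), hospital $h_1$ is unoccupied and $h_1\succ_{s_1}\mu(s_1)$, so $(s_1,h_1)$ blocks. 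The main obstacle I expect is engineering the preferences on $h_1$ and $h_2$ to rule out all five patterns simultaneously: it is the "opposite" orderings $f\succ m\succ s_1$ at $h_1$ and $s_1\succ m\succ f$ at $h_2$ that create the cyclic incentive structure which makes every candidate matching blocked, while the top-top pairs $(s_i,h_i)$ for $i\geq 3$ ensure that the construction lifts cleanly to arbitrary $n$.
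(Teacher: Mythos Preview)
Your argument is correct in substance, though there is a small indexing slip: with singles $s_1,\ldots,s_{n-1}$ and hospitals $h_1,\ldots,h_n$, the pairing ``$s_i\leftrightarrow h_i$ for $i\ge 3$'' only accounts for $n-3$ frozen pairs, leaving $s_2$ and $h_n$ unaccounted for. You presumably mean to pair the $n-2$ leftover singles with the $n-2$ leftover hospitals (e.g.\ $s_i\leftrightarrow h_{i+1}$ for $2\le i\le n-1$); once this is fixed your case analysis goes through unchanged.

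Your route differs from the paper's. You embed a size-two counterexample (one single, one couple, two hospitals) and neutralize the remaining agents by making them mutual top choices, then verify by direct enumeration that all five residual matchings are blocked. The paper instead invokes the Klaus--Klijn example directly at full scale: it gives \emph{every} hospital the preference $m_c\succ_h s\succ_h f_c$ for all singles $s$, so no padding or freezing is needed and the instability argument is uniform across all $n$ hospitals. Your construction is more modular and makes the reduction to a constant-size core completely explicit; the paper's is terser and avoids the bookkeeping of showing that the frozen pairs really are frozen, at the cost of relying on a cited construction rather than spelling out the blocking cases.
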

The proof follows by a simple embedding of  the (elegant) counter example by \citet{KlausCouples1}; In  particular by letting the preference of each hospital $h$ be  $m_c\succ_h s\succ_h f_c$ for every single $s$.



\section{Stability}
\label{sec:stability}

In this section we show:
\begin{theorem}
Let $\Gamma^1,\Gamma^2,\ldots$ be a regular sequence of random markets.
Then the probability that there exists a stable matching tends to $1$ as $n$ goes to infinity.
\label{thm:stable}
\end{theorem}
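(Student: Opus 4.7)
The plan is to bound the probability of SoDA's two failure modes, namely self-eviction in step 3(b21) and permutation repetition (steps 2(a11), 2(a21), 3(b22)), and show that both vanish as $n\to\infty$. After Step~1 of SoDA, DA on singles produces a stable matching $\mu_0$; with $\lambda n$ positions and $n$ singles, at least $(\lambda-1)n$ slots remain empty. Using the uniform boundedness of $Z$, a standard balls-and-bins coupling implies that these empty slots are spread across a constant fraction of the hospitals. This sets up the environment into which the couples are inserted.

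Next I would analyze the rejection chain triggered by inserting a single couple $c$. When $c$ displaces one or two doctors, each displaced doctor starts a DA-style chain. I would show the chain length is stochastically dominated by a geometric random variable with parameter bounded away from zero: at every step, the proposing doctor lands in a hospital with either an available slot, or where she is worse than all currently held doctors, with probability at least $p\geq 1-1/\lambda - o(1)$, using the excess capacity and the uniform boundedness of $Z$ and $Q$. This yields a chain length of $O(\log n)$ with probability at least $1-n^{-\Omega(1)}$, and shows that the hospitals visited by a chain are close to uniform samples (up to the factor $r$).

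I would then bound the probability of collision with already-placed couples. With at most $2|C^n|=O(n^{1-\epsilon})$ couple-occupied slots, the probability that a chain of length $L$ hits a previously placed couple is $O(L\,|C^n|/n)=O(n^{-\epsilon}\log n)$, and the probability of self-eviction within a chain is $O(L^2/n)$. For $\epsilon>1/2$ a direct union bound over the $|C^n|$ insertions gives failure probability $o(1)$. For smaller $\epsilon$ one must use the SoDA reordering: when a chain triggered by $c$ evicts an earlier couple $c'$, $c$ is promoted to be just before $c'$ and the algorithm restarts. I would define a potential on the sequence of visited permutations, for example the set of ordered pairs $(c,c')$ for which $c$ has evicted $c'$, and exploit the conditional independence of fresh chain randomness at each restart to show that the same permutation is revisited with probability exponentially small in $|C^n|$.

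The main obstacle is controlling the number of permutations SoDA visits when $|C^n|$ exceeds $\sqrt{n}$, since in principle two couples can evict each other back and forth and induce a cycle in the reordering relation. The plan is to show, via a martingale or coupling argument on the matching states arising across restarts, that with high probability the induced partial order on couples grows monotonically until it admits a conflict-free linear extension, so that only polynomially many permutations are ever tried. Once the short-chain, collision and no-cycle bounds are in place, a union bound over all chains, all insertions, and all permutation attempts yields a total failure probability of $o(1)$, which establishes the theorem.
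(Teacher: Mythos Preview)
Your outline correctly reproduces the argument for the regime $\epsilon>1/2$ (this is essentially the direct-insertion analysis the paper attributes to \citet{RothCouples}), but the plan for smaller $\epsilon$ has a real gap. The phrase ``conditional independence of fresh chain randomness at each restart'' is the problem: once the preference profile is drawn, every restart of SoDA runs on the \emph{same} preferences. There is no new randomness when the algorithm tries a different permutation, so the rejection chains under different insertion orders are highly dependent random variables, and a potential or martingale argument on the sequence of visited permutations does not obviously control them. In particular, whether $c$ evicts $c'$ under order $\pi$ and whether $c'$ evicts $c$ under order $\pi'$ are correlated events determined by the same fixed profile, and you have not proposed a mechanism to rule out that both occur (which would give a cycle in the eviction relation and make SoDA loop).

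The paper's proof supplies exactly the missing static object. It defines for each couple $c$ an \emph{influence tree} $IT(c,r)$, computed from the singles-only DA matching $\mu$, that contains every hospital $c$ could possibly affect even after $r$ adversarial extra rejections (simulating the worst-case presence of other couples). With $r=4/\epsilon$ these trees have size $O((\log n)^{r+1})$, and the key lemma (Lemma~\ref{same-insertions}) shows that under \emph{any} insertion order the actual hospitals affected by $c$ lie in $IT(c,r)$. One then builds a static directed \emph{couples graph} on $C$ whose edges record potential evictions (an edge $c_1\to c_2$ whenever $IT(c_1,r)$ and $IT(c_2,r)$ intersect at some $h$ with the $c_1$-doctor preferred by $h$). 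The paper shows this graph has weakly connected components of size at most $3/\epsilon$ and, by a union bound over short cycles, is acyclic with probability $1-O(n^{-\epsilon})$. Any topological sort of this graph then gives an insertion order under which no couple evicts an earlier one, by Corollary~\ref{cor:ins2}. The point is that the influence-tree construction converts the dynamic, order-dependent question ``does $c$ evict $c'$ under some run of SoDA'' into a static, order-independent graph property that can be bounded once for the whole profile; your proposal lacks this decoupling device and I do not see how the dynamic argument can be completed without something equivalent to it.
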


To prove  Theorem \ref{thm:stable} we will show that for random preferences the probability that the SoDA algorithm ends without failure converges to 1 as $n$ goes to infinity.
Before we prove the theorem we provide some intuition and a brief outline of the proof.

\subsection{Intuition and Proof Sketch}

The goal is to show that if the number of couples is $m=n^{1-\epsilon}$  (for any $0<\epsilon<1$) then as $n$ approaches infinity the probability of a stable match approaches 1.  To better understand our approach we begin with the intuition for why the result holds for any $\epsilon<\frac{1}{2}$ (essentially we provide the intuition for the result by \citet{RothCouples}). Then we provide intuition for how to obtain the result for $\epsilon<\frac{2}{3}$ and finally for any $\epsilon<1$.

\paragraph{}

\noindent{\bf 1. Number of couples is $n^{\frac{1}{2}-\delta}$:}  Consider the following simplified version of the SoDA algorithm which we call the {\it  direct algorithm}: after finding the stable matching in the market without couples, the couples apply one by one and if some couple evicts another couple the algorithm fails (i.e. it does not attempt to change the permutation over the couples).   Observe that if the algorithm does not fail, it outputs a stable matching.

We will therefore bound the probability that a member of a couple will be evicted from a hospital. We do this iteratively.  When the first couple applies, no other couple will be evicted (since there are no couples in the system). When the second couple $c$ applies, what is the probability that it will evict the first couple?

The second couple $c$ creates a ``chain reaction", which can cause several doctors who were temporarily assigned to  continue applying. To bound the length of this chain consider $f_c$. At some point she is temporarily assigned to a hospital $h$. If this hospital's capacity wasn't full, she did not evict any doctor and therefore also no other couple and we are done. Since there are more positions than doctors, the probability that the hospital has a vacancy is $1-\frac{1}{\lambda}$ (for simplicity we assume here that each hospital has capacity one and the preference distributions are uniform). If the hospital has no vacancy, she evicts a doctor $d_1$ who enters some hospital $h_1$. If $h_1$ has a vacancy, we are done. If $h_1$ is full, a doctor $d_2$ gets kicked out, and looks for a new position. Say $d_2$ is assigned to $h_2$. Again, $h_2$ can have a vacancy, or be full, and this goes onwards. However, since at every step of the chain there is a constant probability for a vacancy, one can show that with probability $1 - 1/n^3$ the number of hospitals $h, h_1, h_2,...$  in the chain is upper bounded by $3 \lambda\log n / (\lambda - 1)$. 

Now, we can estimate the probability that the second couple evicts the first. The second couple kicks out doctors from at most
$6\lambda\log n / (\lambda - 1)$ hospitals. If this list includes the hospitals which admitted the first couple, we could be in trouble. But since preferences are random, the chances that the second couple influences any of these hospitals are upper bounded by
\[2 \cdot \frac{6 \lambda\log n}{(\lambda - 1)n} = \frac{12 \lambda\log n}{(\lambda - 1)n}\]
What about the third couple? Again, it influences at most $6 \lambda\log n / (\lambda - 1)$ hospitals. But now, there are four hospitals which must not be influenced: two hospitals (at most) for each previously assigned couple.
Generalizing this for the $k$-th couple and summing the probabilities we get
\[\sum_{k=1}^{m}\frac{12\lambda(k-1) \log n}{(\lambda - 1)n} < \frac{\lambda m^2 \log n}{(\lambda - 1)n} = O \left(\frac{\log n}{n^{2 \delta}}\right)\]
which goes to zero as $n$ goes to infinity. Note that if $m = \sqrt{n}$ this argument would not hold. In fact an  argument similar to the Birthday Paradox shows the direct algorithm fails with high probability if the number of couples is a large multiple of $\sqrt{n}$.

\paragraph{}
\noindent{\bf 2. Number of couples is  $n^{\frac{2}{3}-\delta}$:}   The direct algorithm algorithm attempts to insert the couples according to a single permutation. A natural attempt to find a stable matching when more couples are in the market is to change the permutation each time a couple kicks out another couple.  Consider the following addition to the direct algorithm: each time a couple $c_i$ evicts a different couple $c_j$ the algorithm starts over but swaps the order between $c_i$ and $c_j$ when the couples apply.

Denote the initial order of insertion by $c_1, c_2, \ldots c_m$. If $c_i$ evicts $c_j$ for $i > j$, swapping places between $c_i$ and $c_j$  will cause $j$ not to be evicted by $c_i$. However, this could create new ``evictions". One can prove that the probability that any other couple ``feels" that $c_i$ and $c_j$ have  swapped places in the application order is at most $O(n^{-1/3-\delta/2})$.
By a similar analysis as in the direct algorithm, the probability that any of the doctors who got evicted  by $c_i$ or $c_j$ enters any of the hospitals of these couples is bounded by
    \[\frac{24n^{2/3-\delta}\log n}{n} < n^{1/3-\delta/2}.\]

What is left to bound is the number of swaps; again, the probability that $c_k$ will evict another couple is roughly $\frac{k}{n}$ where we neglect the $\log n$ factor. Thus the expected number of couples which will evict another couple is bounded by $\frac{m^2}{n} < n^{1/3 - \delta}$. Informally, combining these together one obtains that with probability approaching 1 swapping will solve all  ``eviction" events, the algorithm will find a stable matching and will terminate successfully.

Note that implicitly we assumed  here that any pair of couples will swap places at most once. Unfortunately this approach is not sufficient and to formally obtain our result and we will need some more subtle structures.

\paragraph{}
\noindent{\bf 3. Number of couples is  $n^{1-\epsilon}$  (sketch of proof of Theorem \ref{thm:stable}):}

The SoDA algorithm attempts to find an ordering of the couples, such that if couples apply one by one according  to this order, no couple gets evicted by another couple. Whether or not a couple $c$  evicts another couple $c'$ depends on the (current) matching and  the preference profile. Identifying worst case scenarios, such as where $c$ could ``possibly" evict $c'$ if there exist a configuration in which this happens, are too weak to prove our result. Instead, we devise a notion of whether $c$ is ``likely" to evict $c'$, and use this notion to analyze the algorithm. To do so we define for each couple $c$ an {\it influence tree}; roughly speaking the influence tree of $c$ consists of the  hospitals and doctors which $c$ is most likely to influence, i.e. the doctors and hospitals that are likely to obtain new matches due to the presence of $c$ and its likely evictions.

We will want to show that there are not  ``many" influence tree intersections, since intersections imply two couples might be able to influence the same hospital, and more importantly they might evict each other.  A first key step in this direction is the following:

\noindent{\bf (i) With high probability each influence tree is small (with respect to $n$).}

If influence trees had not  intersected each other, one could have shown that any insertion order of the couples would yield a stable matching with high probability. Essentially \citet{RothCouples} showed that if $\epsilon<0.5$ then the probability that no two influence trees intersect approaches $1$ as $n\to\infty$. This however is not the case for all $\epsilon<1$.

Influence trees, their intersections and hospital preferences induce a useful structure in the form of a directed graph which we call the {\it couples graph}; roughly speaking, in the couples graph each couple is a node, and there is a directed edge from couple $c$ to another couple $c'$ if their influence trees intersect at some hospital $h$ and $c$ can possibly evict some doctor that caused $h$ to be in the influence tree of $c'$ (the doctor can be a member of the couple $c'$).  We will show that the couples graph is sparse:

\noindent{\bf (ii) With high probability all weakly connected components in the couples graph are small.}\footnote{A weakly connected component in directed graph is a connected component in the graph obtained by removing the directions of the edges.}

Recall that an influence tree for one couple does not involve other couples. In the next  step we  verify that influence trees are indeed the ``right" structure:

\noindent{\bf (iii) With high probability if in the algorithm a couple $c$ influences a hospital $h$ under any ordering over the couples $\pi$, then that hospital will also belong to the influence tree of $c$.}

Finally, if one can find a topological sort $\pi$ in the couples graph\footnote{A topological sort $\pi$ is an order over the couples such that no couple has an edge to a couple ahead of him in the order.}  then by letting couples apply one by one according to $\pi$  yields a stable matching. We show:

\noindent{\bf (iv) With high probability there are no directed cycles in the couples graph.}





\subsection{Proof of Theorem \ref{thm:stable}}


We begin with defining influence trees. These will be defined for a fixed realization of the preferences and with respect to a parameter $r$ which should be interpreted as ``possible rejections". First we need a few notations.
Let $\Gamma=(H,S,C,\succ_H,\succ_S,\succ_C)$ be a matching market and let $\mu$ be a matching.
Denote by $o_h(\mu)$ and by $f_h=k_h-o_h(\mu)$ the number of assigned doctors to hospital $h$ and the number of available positions in $h$  under $\mu$ respectively.
We also denote by $d^j(\mu,h)$ to be the $j$-th least preferred doctor according to $\succ_h$ that is assigned to $h$ under $\mu$.


\begin{definition}[Influence Tree]
Let $\Gamma=(H,S,C,\succ_H,\succ_S,\succ_C)$ be matching market with couples and let $\mu$ be the matching produced by the DA algorithm for the sub-market without couples.
Let $d\in D$ and let $r$ be any integer. An {\bf influence sub-tree} of doctor $d$ with root $h$ and with respect to $r$ , denoted by $IT(d,r,h)$ is defined recursively as follows.
\begin{enumerate}
\item[(a)]
If $f_h(\mu)=0$ and $d^{k_h}(\mu,h)\succ_h d$ then let $h'$ be be the next preferred hospital by $d$ after $h$ and let $IT(d,r,h)=IT(d,r,h')$. Otherwise

\item[(b)] Change $\mu$ such that $d$ is assigned to $h$ and:

\begin{enumerate}
\item[(b1)] Add $(h,d)$ to $IT(d,r,h)$.

\item[(b2)] If $r>0$ or $f_h(\mu)=-1$ then:  for each $j=1,\ldots, \min(o_h(\mu),r-f_h(\mu))$ let $h_j$ be the most preferred hospital by $d^j(\mu,h)$ after $h$, and add to $IT(d,r,h)$ the influence sub-tree $IT(d^j(\mu,h),r-(j-1)-f_h(\mu),h_j)$.

\end{enumerate}
\end{enumerate}
For a couple $c=\{f,m\}$, let $(h_f^1,h_m^1),\ldots,(h_f^r,h_m^r)$ be the top $r$ pairs of hospitals according to $\succ_c$ in which the couple $c$ can be accepted. That is, either
\begin{itemize}
\item
$h_f^i=h_m^i$ and $c\subseteq Ch_{h_f^i}(\mu(h_f^i)\cup c)$, or
\item
$h_f^i\neq h_m^i$ and $f\in Ch_{h_f^i}(\mu(h_f^i)\cup \{f\})$ and $m\in Ch_{h_m^i}(\mu(h_m^i)\cup \{m\})$.
\end{itemize}
The influence tree for the couple $c$ is defined to be:
$$
IT(c,r):=\bigcup_{i=1}^r \left(IT(f,r+1-i,h_f^i))\cup IT(m,r+1-i,h_m^i) \right).
$$
\label{def:influence-tree}
\end{definition}

First note that we allow $f_h(\mu)$ to be -1 in the definition of an influence tree (this is possible since under this definition we first assign a doctor to a hospital and only then reject from that hospital.)
Also observe that each time a hospital $h$ is inserted to the influence tree, a doctor $d$ is associated with it.
In this case we say that  $h$  was {\it inserted} to $IT(c,r)$ {\it by} $d$.\footnote{We do not rule out here that $h$ was inserted to the influence tree by two different doctors. We will later
show that the probability of this even is negligible, however.} With a slight abuse of notation we will write $h\in IT(c,r)$ if there exist a doctor $d$ such that $(h,d)\in IT(c,r)$, i.e. $h\in IT(c,r)$ is inserted to $d$ by some doctor.

In the definition of an influence tree for $c$, no other couple other than $c$ involved; the definition in fact simulates the presence of other couples, or in other words it simulates an adversary that can ``reject" doctors  from settling in a hospital $h$ due the possible additional occupied positions that will possibly be taken due to the presence of other couples. The adversary is allowed to reject $r$ times (above the natural rejections).   Importantly, Definition \ref{def:influence-tree}   allows us to analyze a static setting rather than a dynamic setting in which at each point a different number of couples already applied.

Before we continue with the proof we illustrate the definition of an influence tree in the following example.
\begin{example}
Consider a setting with 6 hospitals each with capacity of 2, 5 single doctors, $d_1,d_2,\ldots,d_5$ and two couples $c_1=(d_6,d_7)$ and $c_2=(d_8,d_9)$, and let their preferences be as in Table \ref{table:exInfluence}. To simplify the illustration we chose a preferences that does not ``seem" to be drawn randomly.
\begin{table}[htb]
\begin{center}
\begin{tabular}{ccccccccccccc}
\hline\hline
 \multicolumn{7}{c}{doctors} & & \multicolumn{5}{c}{hospitals}\\
\cline{1-7} \cline{9-13}
 $d_1$ & $d_2$  & $d_3$ & $d_4$ & $d_5$ & $(d_6,d_7)$ & $(d_8,d_9)$  & &  $h_1$ & $h_2$ &  $h_3$ & $h_4$ &  $h_5$  \\
\cline{1-7} \cline{9-13}\\*[-12pt]
 \fbox{$h_1$} & \fbox{$h_1$} & $h_1$  &  \fbox{$h_3$} & \fbox{$h_3$} & $(h_1,h_2)$  & $(h_1,h_1)$   & &  \fbox{$d_1$} & $d_1$ & $d_1$ & $s_1$  & $s_1$  \\
 $h_2$ & $h_2$ & \fbox{$h_2$}  &  $h_5$ & $h_5$& $(h_2,h_1)$  & $(h_2,h_2)$   & & $d_8$ & $d_8$ & $d_8$ & $d_8$  & $d_8$ \\*[2pt]
 $h_3$ & $h_3$ & $h_3$  &  $h_1$ & $h_1$ & $(h_3,h_4)$  & $(h_3,h_4)$   & & $d_9$ & $d_9$ & $d_9$ & $d_9$  & $d_9$ \\*[3pt]
 $h_4$ & $h_5$ & $h_4$  &  $h_4$ & $h_2$ & $(h_4,h_5)$  & $(h_4,h_3)$   & & \fbox{$d_2$} & $d_2$ & $d_3$ & $d_3$  & $d_6$ \\*[3pt]
 $h_5$ & $h_6$ & $h_5$  &  $h_2$ & $h_4$ & $(h_5,h_5)$  & $(h_4,h_2)$   & & $d_5$ & $d_5$ & $d_6$ & $d_5$  & $d_4$ \\*[3pt]
 &     &       &  &   &    & &   &                                   $d_3$ & \fbox{$d_3$}  & $d_2$ & $d_4$  & $d_2$ \\*[3pt]
 &     &       &  &   &    & &    &                                  $d_6$ & $d_6$ & \fbox{$d_5$} & $d_6$ & $d_5$   \\*[3pt]
  &     &       &  &   &    & &   &                                   $d_4$ & $d_4$  & $d_7$ & $d_2$  & $d_7$ \\*[3pt]
 &     &       &  &   &    & &    &                                  $d_7$ & $d_7$ & \fbox{$d_4$} & $d_7$ & $d_3$   \\*[3pt]
 \hline\hline
\end{tabular}
\end{center}
\caption{Preference lists.}
\label{table:exInfluence}
\end{table}

The Deferred Acceptance algorithm for the market without couples produces the matching given in the boxes as in Table \ref{table:exInfluence}.
The influence trees of $c_1=(d_8,d_9)$ with parameters $r=0$ and  $r=1$ are given in Figure \ref{fig:inf}(a). For $r=0$ the tree captures the ``chain reaction" that $c_1$ causes after entering the first pair of hospitals  that accepts it, these the pair of hospitals $(h_3,h_4)$. For $r=1$, the tree Had $c_1$ would be rejected from the pair $(h_3,h_4)$ note that the next pair that would have accepted it would be $(h_4,h_5)$.
Thus the influence tree of $c_1$ includes with $r=1$ includes both its tree for $r=0$ and the chain reaction it causes had it been accepted to $(h_3,h_4)$ (see Figure \ref{fig:inf}(b)).
Similarly the influence tree of couple $c_2=(d_8,d_9)$ is given in  Figure \ref{fig:inf}.
\begin{figure}[htb]
\centering
\subfloat[Influence tree of $c_1=(d_6,d_7)$.]{\label{fig:withHold}\includegraphics[width=6cm,height=2cm]{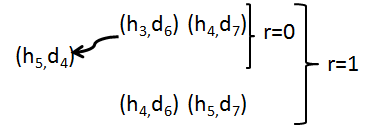}}
\subfloat[Influence tree of $c_2=(d_7,d_8)$.]{\label{fig:AllwithHold}\includegraphics[width=7cm,height=2cm]{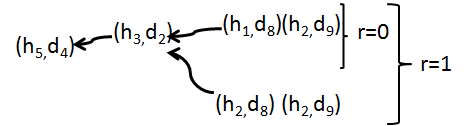}}
\caption{Influence trees with parameters $r=0$ and $r=1$.}
\label{fig:inf}
\end{figure}
\label{example-inf}
\end{example}

At this point we fix $r$ to be  $r = 4/\epsilon$ for some fixed $0<\epsilon<1$. One should interpret this $r$ as a ``small" number of possible rejections (relative to $n$). In random market the influence trees (IT's) are random variables.
\begin{lemma}
\begin{enumerate}
\item For every hospital $h$ couple $c$, $\Pr(h\in IT(c,r))=O\left((\log n)^{r+1} / n\right)$.
\item The probability that the size of every influence tree $IT(c,r)$ is $O((\log n)^{r+1})$ is at least  $1-n^{-3}$.
\item
The probability that for all couples $c$, each hospital $h$ appears in $IT(c,r)$ at most once is at least $1-n^{-\epsilon/2}$.
\end{enumerate}
\label{small-blob}
\end{lemma}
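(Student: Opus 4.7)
The plan is to treat the influence tree as a random combinatorial object whose growth is driven by the random preference draws. The key structural input is that in the seed matching $\mu$ (produced by DA on the sub-market without couples) at least $(\lambda-1)n$ positions are vacant, so a constant fraction of hospitals have at least one free slot. Because $Z$ and $Q$ are uniformly bounded, whenever a doctor in the IT construction needs to try her ``next preferred hospital'', the draw assigns probability $\Omega(1)$ to the pool of hospitals with vacancies; in particular each step of the natural chain inside the tree terminates (i.e.\ leads to some $h$ with $f_h(\mu)\ge 0$) with probability at least some constant $p$ depending only on $\lambda$, the bounded-ratio parameter $r$ and the capacity bound $c$.

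For part (2), I would first control a single chain, i.e.\ $|IT(d,0,h)|$. Since each step terminates with probability $\ge p$, the chain length exceeds $t$ with probability at most $(1-p)^t$, and picking $t=C\log n$ with $C$ large makes this $\le n^{-5}$. For general $r$ I would induct on $r$: at every node of the main chain at most $r'\le r$ additional sub-trees branch off, each with parameter strictly less than $r'$. A straightforward induction then gives $|IT(d,r,h)|\le (C\log n)^{r+1}$ with probability $\ge 1-n^{-4}$. The couple's tree adds at most $2r$ such sub-trees, and the stated $1-n^{-3}$ follows by a union bound over the $O(n^{1-\epsilon})$ couples.

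For part (1), I would argue by linearity and symmetry. The construction of $IT(c,r)$ performs a bounded number of ``next preferred hospital'' draws, each of which, by uniform boundedness of $Z$ and $Q$, hits any fixed $h$ with probability $O(1/n)$. Conditioning on the high-probability event from part (2) that the tree uses at most $N=O((\log n)^{r+1})$ such draws, a union bound gives $\Pr(h\in IT(c,r))=O((\log n)^{r+1}/n)$, with the residual event contributing at most $n^{-3}$ which is absorbed. For part (3) the same accounting shows that the probability two insertions land on the same hospital is at most $O(N^2/n)=O((\log n)^{2(r+1)}/n)$; unioning over the $O(n^{1-\epsilon})$ couples and recalling that $r=4/\epsilon$ is constant in $n$, the probability of any collision is $O(n^{-\epsilon}\polylog n)\le n^{-\epsilon/2}$ for large enough $n$.

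The main obstacle is justifying that the termination probability $p=\Omega(1)$ really survives throughout the recursive construction, since the procedure modifies $\mu$ along the way and subsequent ``next hospital'' draws are conditioned on previously seen hospitals. Both effects are controlled because only $\polylog n$ hospitals are touched inside a single tree: the fraction of vacant positions in the modified $\mu$ remains $(\lambda-1)/\lambda - o(1)$, and conditioning the per-step draw on avoiding $O(\log n)$ previously seen hospitals only distorts its distribution by a factor $1+o(1)$. With these quantitative controls in place the three bounds above combine to yield the claims.
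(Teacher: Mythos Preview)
Your proposal is correct and follows essentially the same route as the paper: decompose $IT(d,r,h)$ into a ``main path'' with geometric termination probability $p=\Omega(1)$ (hence length $O(\log n)$ with very high probability), attach lower-order subtrees along it, and induct on $r$ to get the $(C\log n)^{r+1}$ size bound; parts (1) and (3) then follow by the same union-bound accounting you describe. Your explicit discussion of why the termination probability survives the conditioning on previously drawn hospitals and the local modifications to $\mu$ is, if anything, more careful than the paper's treatment, which simply asserts the independence.
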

\begin{proof}
We begin with the second part. Let $c$ be a couple. For each of the two $d\in c$ and for each $h'\neq h$ we will
give an upper bound of $O\left((\log n)^r / n\right)$ on $\Pr(h\in IT(d,r,h'))$. The claim will follow from the definition of $IT(c,r)$ and
union bound.

An alternative way of viewing the recursive definition of $IT(d,r,h')$, is as follows: doctor $d$ proceeds down his list beginning with $h'$ until
he finds the first hospital willing to accept him. If $d$ is accepted into a hospital $h_1$ and $h_1$ was full to capacity, then some doctor $d'$ is evicted
and goes to a hospital $h_2$, and we add $IT(d',r,h_2)$ to $IT(d,r,h')$. In this case, continuing the ``chain reaction" did not require any arbitrary rejections.
We call the hospitals added into $IT(d,r,h')$ with parameter $r$ {\bf the main path} of $IT(d,r,h')$.
We then also allow the adversary to introduce up to $r$ arbitrary rejections (for example, precluding $d$ from being accepted into $h_1$).
Thus the influence tree is composed of the main path, with lower-order  influence trees (i.e. influence trees with a strictly smaller value of $r$) attached along it.

 We first show by induction that with probability at least $1-n^{-6}$ the length of the main path in $IT(d,r,h')$ is at most $b\log n$, where $b=6\cdot\frac{c_{max}\cdot \gamma_{max}}{\lambda-1}$.
 At any step along the main path, for the main path to continue, the currently evicted doctor $d$ needs to choose a {\em full} hospital $h$.
 Because of the way the doctors' preferences are sampled, the probability of this happening is bounded by $1-\frac{\lambda-1}{c_{max}\cdot \gamma_{max}}$. Since each subsequent step along the path is independent
 from the previous ones, the bound follows.

 By union bound, we see that with probability at least $1-n^{-4}$ {\em all} potential main paths contain at most $b\log n$ hospitals. Each main path of length $\ell$ recursively
 gives rise to at most $r\cdot \ell$ lower-order influence trees (i.e. influence trees with smaller $r$) that are added to $IT(d,r,h')$. Thus we can prove by induction that for each
 $r$, the size $S(r)$ of the largest order-$r$ influence tree is bounded by $(1+b r \log n)^{r+1} = O((\log n)^{r+1})$. For the base case, an influence tree with $r=0$ only contains the main path,
 and thus $S(0)\le b\log n$. For the step, we get
 \begin{multline*}
 S(r)\le b\log n + (b\log n)\cdot r \cdot S(r-1) \le b\log n +  (b\log n)\cdot r \cdot (1+b r \log n)^{r} < \\
  (1+b r \log n)^{r} +  (b\log n)\cdot r \cdot (1+b r \log n)^{r} = (1+b r \log n)^{r+1}.
 \end{multline*}

Next, the first part of the lemma follows from the proof of the second part and the fact that the hospitals that are added to $IT(c,r)$ are hospitals on
the  doctors' preference lists and are chosen independently. Thus the probability of $h$ to be added to $IT(c,r)$ at some point is bounded
by $S(r)\cdot (c_{\max}\cdot \gamma_{max}/n) = O((\log n)^{r+1}/n)$.

Finally, we show that $IT(c,r)$ does not ``intersect itself" except with probability $<n^{\epsilon/2}$. Note that in particular this means that
the members of the couple may not apply into the same hospital or evict each other. We have seen that the probability of a hospital $h$ belonging
to $IT(c,r)$ is bounded by $O(S(r)/n)$. Similarly, the probability of $h$ to be added twice or more to $IT(c,r)$ is bounded by $O(S(r)^2/n^2)$.
Taking a union bound over all possible hospitals $h$ and all possible couples $c$, we see that the probability that any hospital appears in
any $IT(c,r)$ twice or more is bounded by
$$
O(S(r)^2/n^2) \cdot n \cdot n^{1-\epsilon} < n^{-\epsilon/2}.
$$
\end{proof}

Throughout the remainder of the proof we will assume that each hospital appears in each $IT(c,r)$ at most once, neglecting an event of probability $<n^{-\epsilon/2}$.

In fact, in Lemma \ref{small-blob}, one can prove a stronger bound of $O(\log n /n)$ for the probability that a hospital belongs to an influence tree. Although we do not prove or use the stronger bound in the rest of the paper, it provides intuition for why the SoDA algorithm works well in even in a rather small market (e.g. when $n = 256$ we have $(\log 256)^3 =8^3 = 512$ which does not explain why the algorithm works).


Next we analyze how much influence trees intersect with each other.
Let $c_1$ and $c_2$ be two different couples. We say that two influence trees $IT(c_1,r)$ and $IT(c_2,r)$  {\bf intersect} at hospital $h$ if there exist $d'$ and $d''$ such that $d'\neq d''$, $(h,d')\in IT(c_1,r)$ and $(h,d'')\in IT(c_2,r)$.\footnote{It  is possible that  if two influence trees intersect they will have other nodes  $(\tilde{h},\tilde{d})$ in common, since there might be common paths that continue from the point they intersect.}

\begin{lemma}
No two influence trees intersect more than once, except with probability $<n^{-\epsilon/2}$.
\end{lemma}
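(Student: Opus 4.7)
The plan is to bound, for every pair of distinct couples $c_1, c_2$ and every pair of distinct hospitals $h_1, h_2$, the probability of the event $\{h_1, h_2 \in IT(c_1,r) \cap IT(c_2, r)\}$, and then union bound over the roughly $n^{2-2\epsilon}\cdot n^2$ such quadruples.

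First, I would refine the counting argument from Lemma \ref{small-blob}. The tree $IT(c, r)$ is built by inspecting at most $S(r) = O((\log n)^{r+1})$ ``slots'' of the form (doctor, preference rank), each of which draws a hospital according to $Z$; since the distribution is bounded, each slot returns any specific hospital with probability $O(\gamma_{\max}) = O(1/n)$. Union bounding over the at most $S(r)^2$ ordered pairs of distinct slots---using independence across distinct doctors' lists, and noting that dependencies within a single list only affect constants---yields
\[
\Pr\bigl(\{h_1, h_2\} \subseteq IT(c, r)\bigr) = O\!\left((\log n)^{2(r+1)}/n^2\right).
\]

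Next, I would exploit independence between the couples. Conditionally on the singles' preferences (and hence on the DA matching $\mu$), the trees $IT(c_1, r)$ and $IT(c_2, r)$ are determined entirely by the independently-drawn preferences of $c_1$ and $c_2$, so they are conditionally independent. Combining with the previous step,
\[
\Pr\bigl(\{h_1, h_2\} \subseteq IT(c_1, r) \cap IT(c_2, r)\bigr) \le O\!\left((\log n)^{4(r+1)}/n^4\right).
\]
Finally, the union bound over $O(n^{2-2\epsilon})$ pairs of couples and $O(n^2)$ pairs of hospitals gives
\[
O\!\left(n^{2-2\epsilon}\cdot n^2 \cdot (\log n)^{4(r+1)}/n^4\right) = O\!\left((\log n)^{4(r+1)}\, n^{-2\epsilon}\right),
\]
which, since $r=4/\epsilon$ is a fixed constant and any polylogarithmic factor is $o(n^{3\epsilon/2})$, is $o(n^{-\epsilon/2})$ and in particular below $n^{-\epsilon/2}$ for $n$ large enough.

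The main obstacle is the first step: the tree's structure itself depends on which hospitals have been drawn, so the ``slot draws'' are not literally independent. I would handle this by conditioning on the random tree-exploration process and observing that at each newly exposed slot the next hospital is still drawn from $Z$ (modulo duplicate rejection, which only affects constants). In addition, the bound $|IT(c,r)|\le S(r)$ itself only holds with probability $1 - O(n^{-3})$; after a union bound over couples this exceptional case contributes $O(n^{-2-\epsilon})$, which is absorbed into the final error term.
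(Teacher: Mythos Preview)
There is a real gap in the independence step. You correctly observe that, conditional on the singles' preferences $X$, the trees $IT(c_1,r)$ and $IT(c_2,r)$ are determined by the independent preference draws of $c_1$ and $c_2$ and are therefore conditionally independent. But conditional independence only gives $\Pr(A\cap B)=E_X[\Pr(A\mid X)\Pr(B\mid X)]$, not $\Pr(A)\,\Pr(B)$. Your marginal bound $\Pr(\{h_1,h_2\}\subseteq IT(c,r))=O(S(r)^2/n^2)$ is obtained (as in Lemma~\ref{small-blob}) by using the randomness in the \emph{singles'} preference slots, which make up essentially all of the tree after the couple's first step. Once you condition on $X$ that randomness is gone: the eviction chains are then deterministic, and for a fixed singles profile the conditional probability $\Pr(\{h_1,h_2\}\subseteq IT(c,r)\mid X)$ is not controlled by the marginal bound --- a profile in which many chains funnel through $h_1$ can make it $\Theta(1)$. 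Nothing in your argument rules out that such profiles contribute enough mass to make $E_X[\Pr(A\mid X)^2]$ much larger than $\Pr(A)^2$. (Relatedly, the event $\{h_1,h_2\}\subseteq IT(c_1,r)\cap IT(c_2,r)$ you bound is strictly larger than ``the trees \emph{intersect} at $h_1$ and $h_2$'', since intersection requires \emph{different} doctors to insert the hospital; by discarding this you also lose the one structural reason the relevant ``slots'' would have to be distinct.)

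This correlation through shared singles' chains is exactly what the paper has to work to control. It does not claim independence; instead it (i) decomposes into ``intersect exactly $k$ times'' for each $k\ge 2$; (ii) disposes separately of the case where $h_1$ is an ancestor of $h_2$ in one of the trees, by showing this forces a self-intersection of the higher-order tree $IT(c_2,2r+c_{\max})$; and (iii) in the remaining case, bounds $\Pr(h_1,h_2\in IT(c_2,r)\mid h_1,h_2\in IT(c_1,r))$ by analyzing $IT(c_2,r)$ in a ``modified world'' with the hospitals of $IT(c_1,r)\setminus\{h_1,h_2\}$ deleted, arguing that if the deletion changed $IT(c_2,r)$ before it reached $h_1$ or $h_2$ there would already be a third intersection, contradicting the ``exactly twice'' hypothesis. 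Your union-bound-plus-independence route bypasses all of this and does not go through as written.
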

\begin{proof}
By Lemma \ref{small-blob}, we can assume that for  every couple $c$ the size of $IT(c,r)$ is at most $O\left((\log n)^{r+1}\right)$.
For the remainder of the proof, we will denote this upper bound on the size of $IT(c,r)$ by $S(r)=O\left((\log n)^{r+1}\right)$. Recall also that we have assumed that no $IT(c,r)$ intersects itself.

We prove that with high probability no two influence trees intersect exactly $2$ times. A similar proof shows that for every $3\le k \le S(r)$ no two influence trees
intersect exactly $k$ times. The proof will then follow by a union bound on $k$ (since the size of each tree is  $\le S(r)$ with high probability they cannot intersect more than $S(r)$ times).

  Let $c_1, c_2$ be two couples, and $h_1, h_2$ be two hospitals. We want to bound the probability of the event
  \begin{equation}
  \label{eq:1}
  \Pr(h_1, h_2 \in IT(c_1,r) \cap IT(c_2,r)) = \Pr(h_1, h_2 \in IT(c_1,r)) \cdot \Pr(h_1, h_2 \in IT(c_2,r)| h_1, h_2 \in IT(c_1,r)).
  \end{equation}
We first note that if $h_1$ is an ancestor of $h_2$ in, e.g. $IT(c_1,r)$, and $IT(c_1,r)$ intersects $IT(c_2,r)$ in both
$h_1$ and $h_2$, then the influence tree $IT(c_2,2r+c_{max})$ will self-intersect at $h_2$. The hospital $h_2$ will be added to $IT(c_2,2r+c_{max})$ twice:
once following the path in $IT(c_2,r)$, and a second time through $h_1$ and then following the path from $h_1$ to $h_2$ in $IT(c_1,r)$. Since
$2r+c_{max}$ is a constant, by Lemma \ref{small-blob} the probability that any $IT(c,2r+c_{max})$ will self intersect is smaller than $n^{-\epsilon/2}$, and can be disregarded.
Thus we can assume that $h_1$ and $h_2$ are not each other's ancestors in either $IT(c_1,r)$ or $IT(c_2,r)$.

We begin by calculating the probability of the first event in \eqref{eq:1}. A similar proof to that of Lemma \ref{small-blob} gives that the probability for this event is \[\Pr(h_1, h_2 \in IT(c_1,r)) = O\left(\frac{S(r)^2}{n^2}\right).\]

  Rather than compute $\Pr(h_1, h_2 \in IT(c_2,r)| h_1, h_2 \in IT(c_1,r))$ directly, to avoid the conditioning, we consider inserting $c_2$ into a modified world, in which all  hospitals in $IT(c_1,r)$ except for $\{h_1,h_2\}$ and all the doctors in these hospitals do not exist. We argue that in this case,
    \[\Pr(h_1, h_2 \in IT(c_2,r)) = O\left(\frac{S(r)^2}{n^2}\right)\]
  using similar reasoning.

The influence tree generated in the modified algorithm (where we took out some of the hospitals) may differ from the one in the ``real" algorithm. Note however that if removing $IT(c_1,r)$ affects the generation of the tree $IT(c_2,r)$ before it reaches $h_1, h_2$, then it is the case that $IT(c_2,r)$ intersects $IT(c_1,r)$ at another hospital (which comes before $h_1,h_2$). But this is a contradiction, since we assumed $IT(c_1,r), \ IT(c_2,r)$ intersect {\em exactly} twice.

  Multiplying the probabilities, we get that
  \[\Pr(h_1, h_2 \in IT(c_1,r) \cap IT(c_2,r)) = O\left(\frac{S(r)^4}{n^4}\right)\]
  Taking a union bound over $O(n)$ hospitals and $n^{1 - \epsilon}$ couples, bounds the probability that exist two couples which intersect exactly twice is at most
  \[ O\left(\frac{S(r)^4}{n^{2\epsilon}}\right).\]
  We do not present the proof for exactly $k$ intersections, and only state that the probability for that event drops at a rate of
  \[\frac{S(r)^{2 k}}{n^{k \cdot \epsilon}} < \frac{S(r)^4}{n^{2\epsilon}}.\]
  Taking a union bound over all possible values of $k$, we get that the probability that any two couples intersect strictly more than once is at most
  \[ O\left(\frac{S(r)\cdot S(r)^4}{n^{2\epsilon}}\right) = \frac{\polylog(n)}{n^{2\epsilon}}.\]
  as required.\footnote{We write $\polylog n$ for a polynomial in $\log n$. In particular $\frac{\polylog n}{n^{2\epsilon}}$ tends to zero as $n$ tends to infinity.}
\end{proof}

Observe that in the definition of an influence tree for a couple $c$, no other couple is involved and therefore the tree captures only what possibly could have happened  had there  been other couples.  The SoDA algorithm inserts couples one by one after the DA algorithm has terminated, and if some couple $c_1$ evicts  another couple $c_2$  the order of their insertions is altered so that $c_1$  is moved ahead of $c_2$.
Intuitively the intersection of two  influence trees, of $c_1$ and of $c_2$, together with the hospital preferences will provide a good guess which couple to insert first. This motivates the following definition of the couples graph:

\begin{definition}
Let $\Gamma=(H,S,C,\succ_H,\succ_S,\succ_H)$  be a matching market and let $r>0$.
In a (directed) {\bf couples graph} for depth $r>0$, denoted by $G(C,r)$ the set of vertices is $C$ and for every two couples  $c_1,c_2\in C$
there is a directed edge from $c_1$ to $c_2$ if and only if there exist $h\in H$ and $d_1,d_2\in D$ ($d_1\neq d_2$) such that $(h,d_1)\in IT(c_1,r)$ and $(h,d_2)\in IT(c_2,r)$ and $d_1\succ_h d_2$.
\label{def:couples-graph}
\end{definition}
Before we continue we illustrate a couples graph.
\begin{example}
Consider the same market as in Example \ref{example-inf} (see Table \ref{table:exInfluence}).
Note that the influence trees with $r=1$  intersect in $h_3$ where $(h_3,d_2)\in IT(c_2,1)$ and $(h_3,d_6)\in IT(c_2,1)$.
Since $s_6\succ_{h_3} d_2$ the  couples graph with $r=1$ is as in Figure \ref{fig:graph}. Indeed letting $c_1$ apply before $c_2$ (after the DA stage) will end without any couple evicting each other and in a stable matching.
\begin{figure}[htb]
\centering
{\includegraphics[width=3cm,height=1cm]{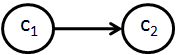}}
\caption{Couples graph for $r=1$.}
\label{fig:graph}
\end{figure}
\label{example-graph}
\end{example}

Our goal will be to show that with high probability the graph $G(c,r)$ can be topologically sorted; such a sorting corresponds to a ``good" insertion order of the couples in the SoDA algorithm. In example \ref{example-graph} the order $c_1,c_2$ is a topological sort.

In a couples graph $G=G(C,r)$ a {\it weakly connected  component} is defined to be a connected component in the graph obtained from $G$ by removing the direction of the edges.\footnote{A set of nodes in an undirected graph is called a connected component if there exists a path between each to nodes in the set.}
\begin{lemma}\label{small-weak-connected-components}
With  probability  $>1-1/n$ the largest weakly connected component of the couples graph has size at most $\frac{3}{\epsilon}$.
\end{lemma}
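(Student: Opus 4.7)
The plan is to bound the expected number of $k$-vertex subtrees of $G(C,r)$, show the sum over all $k>3/\epsilon$ is $o(1/n)$, and conclude by Markov's inequality. Any weakly connected component of size $k$ contains a labelled subtree on $k$ of its vertices, so it suffices to bound the probability that any such subtree is present in $G(C,r)$ for any integer $k>3/\epsilon$. Fix such a $k$. By Cayley's formula there are $k^{k-2}$ labelled trees on any $k$-subset of couples, and there are at most $\binom{|C|}{k}\leq |C|^k=O(n^{(1-\epsilon)k})$ subsets, so
\[
\mathbb{E}\bigl[\#\ k\text{-vertex subtrees of }G(C,r)\bigr]\leq O\bigl(n^{(1-\epsilon)k}\bigr)\cdot k^{k-2}\cdot \max_T\Pr\bigl(T\subseteq G(C,r)\bigr),
\]
where the max runs over labelled trees $T$ on a fixed $k$-subset.

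For a fixed tree $T$, each edge $e=(c_i,c_j)\in T$ needs a witness hospital $h_e\in IT(c_i,r)\cap IT(c_j,r)$. Union-bounding over the $|H|^{k-1}$ witness tuples $(h_e)_{e\in T}$ reduces the question to bounding the probability that every chosen witness actually sits in the required intersection of two influence trees. The key observation is that, conditional on the \emph{baseline} -- the DA matching $\mu$ produced on the singles together with all singles' preference lists -- each $IT(c,r)$ is a deterministic function of the preference list of couple $c$ alone. Because couples' preference lists are drawn independently, the trees $\{IT(c_i,r)\}_{i=1}^{k}$ are mutually independent given the baseline, so the joint event factorises by couple:
\[
\prod_{i=1}^{k}\Pr\bigl(h_e\in IT(c_i,r)\text{ for every edge }e\in T\text{ incident to }c_i\bigm|\ \text{baseline}\bigr).
\]

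For a couple $c$ of degree $d$ in $T$ I would bound the corresponding factor by $(\polylog n /n)^d$ by iterating the ``modified world'' coupling from the proof of Lemma~\ref{small-blob}: insert the $d$ specified hospitals one at a time; after each insertion excise the already-explored portion of the world, and observe that the next hospital must still lie on one of the at most $S(r)=O((\log n)^{r+1})$ paths explored by the influence-tree recursion, contributing an independent factor of $O(S(r)/n)$. Using $\sum_i d_i=2(k-1)$ and combining with the witness-tuple union bound, $\Pr(T\subseteq G(C,r))\leq \polylog n /n^{k-1}$. Substituting back yields
\[
\mathbb{E}\bigl[\#\ k\text{-vertex subtrees of }G(C,r)\bigr]\leq \polylog n\cdot n^{1-\epsilon k}.
\]
Since $\epsilon k>3$, this is $o(n^{-2})$, and summing over $k>3/\epsilon$ (a series geometric in $n^{-\epsilon}$) keeps the total $o(1/n)$. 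Markov's inequality then gives the claimed probability $>1-1/n$.

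The main obstacle is the per-couple bound: the naive inequality $\Pr(\bigwedge_i h_i\in IT(c,r))\leq \prod_i\Pr(h_i\in IT(c,r))$ is not automatic because $IT(c,r)$ is built adaptively from the couple's preferences, so the modified-world coupling of Lemma~\ref{small-blob} has to be iterated $d$ times, each time excising the hospitals fixed at previous insertions. Everything else is routine union-bound bookkeeping analogous to what the paper has already done in the two preceding lemmas.
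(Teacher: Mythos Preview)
Your approach is in the same spirit as the paper's---a first-moment bound over connected substructures---but the decomposition is genuinely different. The paper does not use labelled trees or Cayley's formula at all. Instead it fixes an \emph{ordered sequence} $I=(c_1,\ldots,c_k)$ with $k=\lfloor 3/\epsilon\rfloor$ and bounds the probability of the event $A_I$ that each $IT(c_i,r)$ meets $\bigcup_{j<i} IT(c_j,r)$. This is handled by a sequential conditioning (deferred-decisions) argument: given the first $i-1$ trees, $IT(c_i,r)$ is built by revealing fresh preference samples one at a time, so the chance it hits any of the previously exposed $\le (i-1)S(r)$ hospitals is $O(i\,S(r)^2/n)$. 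Multiplying the $k-1$ factors and union-bounding over the $(n^{1-\epsilon})^k$ sequences finishes. The advantage of the paper's route is that it never needs your per-couple estimate $\Pr\bigl(h_1,\ldots,h_d\in IT(c,r)\mid\text{baseline}\bigr)\le (\polylog n/n)^d$; it only ever asks for one intersection at a time.

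That per-couple estimate is where your argument has a real gap. You correctly note that, conditional on the baseline (the DA matching together with all singles' lists), the trees $IT(c_i,r)$ are independent across $i$. But conditional on the baseline the only remaining randomness in $IT(c,r)$ is the couple's own list, which fixes merely the $2r$ ``roots''; once a root $g$ is chosen, the entire eviction chain from $g$ is a \emph{deterministic} function of the baseline. Hence $\{h\in IT(c,r)\}$ given the baseline is the event that one of the couple's roots lands in the basin $G_h=\{g: h\text{ is reached from }g\}$, and there is no a~priori reason $|G_h|=O(\polylog n)$ for every $h$ and every baseline. The modified-world coupling you invoke works \emph{unconditionally} (over all randomness, including the singles' lists), not conditionally on the baseline, so iterating it does not give the conditional factor you need. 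A natural repair is to drop the per-couple factorisation and instead process the edges of $T$ sequentially in a rooted order, bounding each new intersection by $O(S(r)^2/n)$ via deferred decisions---but at that point you have essentially reproduced the paper's argument with extra bookkeeping.

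Two smaller points. First, you need only a single $k$ (the least integer exceeding $3/\epsilon$), since any larger component contains a connected $k$-vertex subtree; summing over $k$ is unnecessary, and the claim that the series is ``geometric in $n^{-\epsilon}$'' ignores the $k^{k-2}$ Cayley factor, which is not polylogarithmic once $k$ varies. Second, your bound $(\polylog n/n)^{d}$ tacitly assumes the $d$ witness hospitals incident to a given couple are distinct; when witnesses coincide the per-couple factor weakens, and one must check that the reduced number of witness tuples compensates.
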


\begin{proof}
   We will first consider an arbitrary set of  $\frac{3}{\epsilon}$ couples and show that the probability that they form a weakly connected component is very small.
   The statement of the lemma will follow through union bound.
    Let $I = \left(c_{1},c_{2}, \ldots,  c_{\lfloor 3/\epsilon\rfloor}\right)$ be a sequence  of couples with no repetitions: $c_i \neq c_{j}$. Let $A_{I}$ be the event that for every $1<i\leq \lfloor 3/\epsilon\rfloor$ the influence tree of $c_i$ intersects with one of the previous influence trees, that is
  \[IT(c_{i},r) \bigcap \left(\cup_{j < i} IT(c_{i},r) \right) \neq \emptyset.\]

 We first show that
  \begin{equation}
  \Pr(A_{I}) \le \frac{(S(r)^2\cdot c_{max} \cdot \ga_{max} \cdot 3/\epsilon)^{\lfloor 3/\epsilon \rfloor}}{n^{\lfloor 3/\epsilon\rfloor-1}}\le \frac{(S(r)^2\cdot c_{max} \cdot \ga_{max} \cdot 3/\epsilon)^{3/\epsilon}}{n^{ 3/\epsilon-2}},
  \label{ineq:AI-small}
  \end{equation}
  where $S(r)$ is the bound on the size of the influence trees $IT(c_i,r)$ as in  Lemma \ref{small-blob}.

%
%

Let
    \[IT_i =  \cup_{j \le i} IT(c_j, r)\]
    be the union of the influence trees of the first $i$ couples. The probability of $A_{I}$ can be written as

\begin{equation*}
\Pr(A_{I}) = \Pr \left(IT(2,r)\cap IT_{1} \neq \emptyset\right) \cdot \Pr\left(IT(3,r)\cap IT_{2} \neq \emptyset| \ IT(2,r)\cap IT_{1} \neq \emptyset\right)\cdot \cdots
\end{equation*}
\begin{equation}
\Pr\left(IT(\lfloor 3/\epsilon \rfloor ,r)\cap IT_{\lfloor 3/\epsilon \rfloor - 1} \neq \emptyset| \forall j \le \lfloor 3/\epsilon-1 \rfloor, \ IT(j,r)\cap IT_{j-1} \neq \emptyset \right).
\label{eq-AI-ind}
\end{equation}

All the interactions that cause the influence trees within $IT_{j-1}$ to intersect happen within $IT_{j-1}$, and conditioned on the set $IT_{j-1}$ of hospitals do not affect the probability
of $IT(c_j,r)$ intersecting $IT_{j-1}$.
Hence for every $j=2,\ldots,\lfloor 3/\epsilon\rfloor$,
\begin{multline*}
\Pr\left(IT(c_j ,r)\cap IT_{j - 1} \neq \emptyset| \forall 2\leq l \le j-1, \ IT(l,r)\cap IT_{l-1} \neq \emptyset \right) = \\ \Pr\left(IT(c_j ,r)\cap IT_{j - 1} \neq \emptyset~|~IT_{j-1}\right).
\end{multline*}

Furthermore from  Lemma \ref{small-blob} it follows that the probability that $|IT(c_l,r)|<S(r)$ is at least $1-\frac{1}{n^3}$ and therefore $|IT_{j}|<j\cdot S(r)$. Hence,
\begin{equation*}
\Pr\left(IT(c_j ,r)\cap IT_{j-1} \neq \emptyset ~|~IT_{j-1}\right)\leq \frac{(j-1)\cdot S(r)^2\cdot \ga_{max}}{\lambda n/c_{max}}+\frac{1}{n^3} <\frac{j\cdot S(r)^2\cdot \ga_{max}}{\lambda n/c_{max}}.
\end{equation*}
Since there are $\lfloor 3/\epsilon\rfloor-1$ terms in (\ref{eq-AI-ind}) we derive inequality (\ref{ineq:AI-small}).

To finish the proof, observe that if there is a connected component of size at least $3/\epsilon$ then there exists a sequence $I$ such that $A_{I}$ holds. Since there are $n^{1-\epsilon}$ couples there  exists
fewer than \[\left(n^{1 - \epsilon}\right)^{3/\epsilon} = n^{3/\epsilon - 3}\] such possible sequences $I$. Therefore using a union bound over all of them proves the lemma.
\end{proof}

Recall that we ignore all realizations of preferences at which two influence trees intersect more than once (in particular there is at most a single edge between every two couples in the couples graph). From now one we also ignore realizations where the largest weakly connected component of the couples graph contains more than $3/\epsilon$ couples.
\begin{lemma} \label{no-cycle}
With probability $1- O\left(\frac{1}{n^{\epsilon}}\right)$ the couples graph has no directed cycles.
\end{lemma}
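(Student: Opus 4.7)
The plan is to bound, via union bound, the expected number of directed cycles in the couples graph $G(C,r)$ and show that this expectation is $O(1/n^{\epsilon})$. Under the standing assumptions of the proof (no two influence trees intersect more than once, and every weakly connected component of $G(C,r)$ has size at most $3/\epsilon$), any directed cycle has length $k$ with $3\le k\le 3/\epsilon$. The upper bound on $k$ is immediate, since a cycle must lie inside a weakly connected component. For the lower bound, a directed $2$-cycle on couples $c_1,c_2$ would require two edges $c_1\to c_2$ and $c_2\to c_1$, each arising from an intersection of $IT(c_1,r)$ and $IT(c_2,r)$; since the two trees intersect at most once, both edges would have to be produced by a single hospital $h$ and the unique pair of doctors $d_1\in IT(c_1,r)$ and $d_2\in IT(c_2,r)$ associated with $h$, requiring both $d_1\succ_h d_2$ and $d_2\succ_h d_1$, a contradiction.

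For each $k$ with $3\le k\le 3/\epsilon$ I will count the expected number of directed $k$-cycles. There are at most $(n^{1-\epsilon})^k$ ordered $k$-tuples of distinct couples. For a fixed tuple $(c_1,\ldots,c_k)$, the event that $c_1\to c_2\to\cdots\to c_k\to c_1$ is a directed cycle entails, in particular, the $k$ pairwise intersections $IT(c_i,r)\cap IT(c_{(i\bmod k)+1},r)\neq\emptyset$. A sequential conditioning argument in the spirit of Lemma~\ref{small-weak-connected-components} bounds the probability of this joint event by $O((S(r)^2/n)^k)$, where $S(r)=O((\log n)^{r+1})$ is the high-probability bound on the size of a single influence tree from Lemma~\ref{small-blob}. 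Revealing the trees in the order $c_1,c_2,\ldots,c_k$, each newly revealed tree (of size at most $S(r)$) has probability $O(S(r)^2/n)$ of intersecting any fixed previously revealed tree, and at the last step we additionally require that $IT(c_k,r)$ intersects $IT(c_1,r)$, giving a final factor of $O(S(r)^2/n)$.

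Combining these two bounds, the expected number of directed $k$-cycles is at most
\[
n^{k(1-\epsilon)}\cdot O\!\left(\frac{S(r)^{2k}}{n^k}\right)=O\!\left(\frac{\polylog(n)}{n^{k\epsilon}}\right).
\]
Summing over $k=3,\ldots,\lfloor 3/\epsilon\rfloor$, the sum is dominated by the $k=3$ term, giving a total of $O(\polylog(n)/n^{3\epsilon})=O(1/n^{\epsilon})$. Markov's inequality then implies that $G(C,r)$ is acyclic with probability $1-O(1/n^{\epsilon})$.

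The main obstacle is making precise the product-form bound $O((S(r)^2/n)^k)$ on the probability of the $k$ required intersections. These intersection events are not independent since all influence trees are jointly determined by the full preference profile; the resolution, following the template of Lemma~\ref{small-weak-connected-components}, is to condition on the hospitals and doctors composing each previously revealed tree. Given these, the subsequent tree's ``chain'' is driven by doctor preferences that are independent of the preferences that determined the earlier trees, and hence each conditional intersection probability is indeed $O(S(r)^2/n)$. A subtlety relative to Lemma~\ref{small-weak-connected-components} is the cyclic constraint forcing $IT(c_k,r)$ to intersect both $IT(c_{k-1},r)$ and $IT(c_1,r)$; this is absorbed by paying an extra $O(S(r)^2/n)$ factor at the last revelation.
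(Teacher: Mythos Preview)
Your overall strategy --- union bound over cycle lengths and over ordered tuples of couples --- is the same as the paper's, and your observation that the standing assumptions (each hospital appears at most once in each tree; no two trees intersect twice) already rule out $2$-cycles is correct and tidy. The paper does not exploit this and simply lets $k=2$ be the dominant term in its sum.

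The gap is in the step where you pass from ``$c_1\to c_2\to\cdots\to c_k\to c_1$ is a directed cycle'' to the weaker event ``$IT(c_i,r)\cap IT(c_{i+1},r)\neq\emptyset$ for all $i$'' and then claim the latter has probability $O((S(r)^2/n)^k)$. The trouble is the last revelation: conditioned on $IT(c_1,r),\ldots,IT(c_{k-1},r)$, the probability that $IT(c_k,r)$ meets \emph{both} $IT(c_{k-1},r)$ and $IT(c_1,r)$ is not $O((S(r)^2/n)^2)$ in general. Whenever $IT(c_1,r)$ and $IT(c_{k-1},r)$ share a hospital $h$ --- and for $k=3$ you have just \emph{forced} them to --- the tree $IT(c_k,r)$ closes both constraints by containing only $h$, at cost $O(S(r)/n)$. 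Concretely, for $k=3$ the ``all three trees meet at one $h$'' subcase of the undirected intersection event already has expected count of order $(S(r)/n)^3\cdot n\cdot n^{3(1-\epsilon)}=S(r)^3 n^{1-3\epsilon}$, which diverges for $\epsilon<1/3$. So the weakened event is too coarse to union bound over.

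The paper closes exactly this gap with a short structural claim: a \emph{shortest} directed cycle of length $k$ must use $k$ \emph{distinct} hospitals, one witnessing each edge. The proof is a minimality trick --- if two edges share a hospital $h$, transitivity of $\succ_h$ yields a shorter cycle. With the hospitals $h_1,\ldots,h_k$ fixed and distinct, each $IT(c_i,r)$ must contain the two specified hospitals $h_{i-1}$ and $h_i$, giving a genuine $O((S(r)/n)^{2k})$ bound; the union bound then ranges over couples \emph{and} hospitals, i.e.\ over $\le n^{(1-\epsilon)k}\cdot n^k$ pairs $(I,J)$, and yields $O(S(r)^{2k}/n^{\epsilon k})$. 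Your sequential conditioning can be salvaged, but only after you first establish this distinct-hospitals property (or otherwise argue that the ``shared $h$'' configurations cannot produce a directed cycle, which for a single common $h$ follows from transitivity of $\succ_h$, but becomes messy for partial overlaps).
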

\begin{proof}
We first prove the following claim, that is basically a simple general statement about directed graphs:
\begin{claim} \label{k-hospitals-in-minimal} If the shortest directed cycle has length $k$, it involves $k$ different hospitals.
\end{claim}
\begin{proof}
  Suppose the shortest directed cycle is of length $k$ and consider such a cycle $c_1 \rightarrow c_2 \rightarrow \cdots \rightarrow c_k \rightarrow c_1$. Suppose couples $c_1$ and $c_2$ intersect at $h$ due to $d_1$ and $d_2$ respectively, i.e. $(h,d_1) \in IT(c_1,r)$, $(h,d_2) \in IT(c_2,r)$  and $(h,d_2) \in IT(c_2,r)$. Assume for contradiction that for some $2\leq i\leq k$, $c_{i}$ and $c_{i+1}$ (i is taken modulo k) intersect at hospital $h$ due to some doctors $d_i$ and $d_{i+1}$, i.e. $(h,d_{i}) \in IT(c_i,r)$, $(h,d_{i+1}) \in IT(c_{i+1},r)$ and $d_i \succ_h d_{i+1}$. Consider the case in which $d_{i} \succ_{h} d_{2}$. In this case a cycle of length less than $k$ exists which consists of $c_2 \rightarrow c_3 \rightarrow \cdots \rightarrow c_i \rightarrow c_2$. If  $d_{2} \succeq_{h} d_{i}$, i.e.  either $d_2 \succ_{h} d_{i}$ or $d_2 = d_i$, then $d_1 \succ_{h} d_2 \succeq_{h} d_{i} \succ_{h} d_{i+1}$ implying that $c_{1} \rightarrow c_{i+1} \rightarrow \cdots \rightarrow c_k \rightarrow c_{1}$ is a shorter cycle.
\end{proof}

To prove the lemma it is sufficient to show that the probability that the shortest directed cycle has length $k$ is $O\left(\frac{S(r)^{2 k}}{n^{\epsilon k}}\right)$ since by taking the sum of these probabilities over all values of $k$ gives the result (note that the the dominant term in this sum is when $k=2$).

We proceed in a manner similar to that of the proof of Lemma \ref{small-weak-connected-components}. Let $I = \left(c_{1},c_{2}, \ldots,  c_{k}\right)$ be a sequence of couples without repetitions $c_i \neq c_{j}$. Let $J = \left(h_{1},h_{2}, \ldots,  h_{k}\right)$ be a sequence of $k$ hospitals without repetitions $h_i \neq h_j$. Let $A_{I,J}$ be the event that for every $i=1,\ldots,k$, $IT(c_i,r)$ and $IT(c_{i+1},r)$ intersect at hospital $h_i$. Applying Lemma \ref{small-blob}, and using reasoning  similar to the proof of Lemma \ref{small-weak-connected-components} the probability of the event $A_{I,J}$ can be bounded by
\[\Pr(A_{I,J}) < \frac{(2 S(r)\cdot \ga_{max})^{2k}}{(\lambda n/c_{max})^{2k}}.\]
 Since there are $\le \lambda n$ positions and $n^{1-\epsilon}$ couples, there are $\lambda^k n^k n^{(1-\epsilon)k}$ such different events $A_{I,J}$. A  union bound over all these events implies the lemma.
\end{proof}

For the analysis we will consider the event that the couples graph contains  a cycle as a failure.\footnote{The presence of a cycle does not necessarily imply  that there is no stable matching. In fact the SoDA will often find  stable matchings even when there are cycles in the couples graph.} If the couples graph does not  have cycles, then it has a topological sort. Let $\pi$ denote any topological sort of $G$. We claim that
inserting the couples according to $\pi$ will result in a stable matching with couples. Moreover, we will show that a failure of the SoDA algorithm corresponds to a backward edge in the couples graph.\footnote{A
backward edge is an edge from a newly inserted couple to a previously inserted one.}


The next lemma shows that the influence trees indeed captures ``real influences".
\begin{lemma}\label{same-insertions}
Suppose we insert the couples as in the SoDA algorithm according to some order  $\pi$ until a couple evicts another couple or until all couples have been inserted. If a couple $c$ is inserted and influences  hospital $h$ , then  $h \in IT(c,r)$.
\end{lemma}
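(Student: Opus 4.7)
I will argue by induction along $c$'s real chain reaction in the SoDA execution, maintaining the invariant that every hospital $c$ touches in the real run lies in $IT(c,r)$. Let $\mu$ denote the DA matching on the singles sub-market, and let $\mu'$ be the matching just before $c$ applies. The first preparatory claim is that the ``discrepancy set'' $\{h : \mu(h)\neq \mu'(h)\}$ is contained in $\bigcup_{c' \prec c} IT(c',r)$, where $c' \prec c$ ranges over couples already inserted by $\pi$. This follows directly by applying the induction hypothesis to each earlier couple, since by assumption none of them has been evicted and so the chain they triggered is itself captured by their influence trees.

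The core of the argument is to view the recursive construction of $IT(c,r)$ as a game in which an adversary may introduce up to $r$ ``arbitrary rejections'' beyond the natural ones dictated by $\mu$. I will define an explicit adversary strategy that mimics, step by step, the discrepancies between $\mu$ and $\mu'$ encountered by $c$'s real chain. Concretely, I walk through the real chain of $c$: at each hospital $h$ that a displaced doctor applies to, either (i) $\mu'(h) = \mu(h)$, in which case the real move is exactly a natural move of clause (b2) in Definition \ref{def:influence-tree} and costs no budget; or (ii) $\mu'(h) \neq \mu(h)$, in which case I charge the rejection(s) forced at $h$ in the real run to an ``arbitrary rejection'' of the adversary, paying one unit of budget per extra doctor currently occupying a slot that was vacant in $\mu$. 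By construction, every hospital $c$ really touches is produced by this simulated adversarial run and therefore lies in $IT(c,r)$.

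The main quantitative step is to check that the budget spent does not exceed $r = 4/\epsilon$. Each discrepancy hospital $h$ encountered by $c$'s real chain must belong to some $IT(c',r)$ for a previously inserted couple $c'$. Since we are operating in the high-probability regime where no two influence trees intersect more than once (preceding lemma), each such $c'$ contributes at most one discrepancy hospital shared with $c$'s chain. Moreover, any $c'$ whose tree meets $c$'s tree is by definition adjacent to $c$ in the (undirected) couples graph, hence lies in $c$'s weakly connected component, which by Lemma \ref{small-weak-connected-components} has size at most $3/\epsilon$. So the number of hospitals at which the adversary must act is at most $3/\epsilon - 1$, and each costs $O(1)$ units of budget (bounded by the capacity constant $c_{\max}$), still comfortably below $r = 4/\epsilon$ once $\epsilon$ is small.

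The main obstacle I anticipate is the budget bookkeeping in clause (b2), where the residual budget passed to each recursive sub-tree depends on $f_h(\mu)$ and on the index $j$ of the evicted doctor; one must verify that the adversary can always route its (at most $3/\epsilon - 1$) rejections to precisely the right positions in the recursion tree, and that doing so never overspends on any single branch. I expect this to reduce to a careful but routine accounting, using the fact that along the main path of an influence sub-tree the budget is not consumed (clause (b2) only decrements it when strictly more than one doctor is evicted per hospital or when $f_h(\mu)\ge 0$), so that the $r$ units can be freely re-allocated to the finitely many discrepancy points.
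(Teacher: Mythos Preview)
Your approach is essentially the paper's: both argue by induction over the couples in insertion order, interpret the parameter $r$ as an adversary budget of ``extra'' rejections, and bound the number of adversarial rejections needed to simulate $c$'s real chain by the number of previously inserted couples whose influence trees meet $c$'s chain, which is at most $|{\text{component}(c)}|-1\le 3/\epsilon-1<r$. The paper packages this as a strengthened induction hypothesis $h\in IT(c,i-1)$ where $i$ is $c$'s rank within its weakly connected component, but the content is identical to your adversary-simulation picture.

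One small correction to your budget count. You write that each discrepancy hospital ``costs $O(1)$ units of budget (bounded by the capacity constant $c_{\max}$)'', giving a total of roughly $(3/\epsilon)\cdot c_{\max}$, which need \emph{not} be below $r=4/\epsilon$. The right accounting is per \emph{couple}, not per hospital: under the standing high-probability assumptions (no influence tree self-intersects, no two trees intersect more than once), each previously inserted $c'$ in $c$'s component contributes exactly one extra doctor at the unique hospital where $IT(c',r)$ meets $c$'s chain, hence exactly one unit of adversary budget. Thus the total budget spent is at most $i-1\le 3/\epsilon-1$, not $(3/\epsilon)\cdot c_{\max}$. With that fix, your ``main obstacle'' about routing the rejections through clause~(b2) disappears: along $c$'s chain you simply spend one unit at each of the at most $i-1$ discrepancy hospitals, and this is exactly what the recursive budget formula in Definition~\ref{def:influence-tree} permits.
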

\begin{proof}
Recall that we consider only ``small" weakly connected components (Lemma \ref{small-weak-connected-components} upper bounded the probability that such a component is large).
Let $c$ be the couple currently being inserted, and assume that the statement of the lemma was true for couples inserted before $c$.
Let  $\{c_1, \ldots c_k\}$ be $c$'s weakly connected component in the couples graph, where $k \le 3/\epsilon$, ordered according to their insertion order in $\pi$.
We prove by induction a stronger claim, namely that if  $c=c_i$ influenced a hospital $h$, then $h \in IT(c, i-1)$.

Suppose that $c=c_i$ is currently being inserted and that its insertion affects a hospital $h$. Consider the path of evictions that was started
by $c$ and led to hospital $h$ being affected. There are two types of evictions along this path: the first type would have occurred even without
any other couples present. The second type occurs because a hospital $h'$ on the path has already been affected by a previously inserted couple $c_j$. If this
happens, then the influence tree of $c$ intersects the influence tree of $c_j$ and thus $c_j$ in in the weakly connected component of $c$ in the couples
graph. Moreover, since influence trees intersect only once, evictions due to influences from previously inserted couples happen at most $i-1$ times:
at most once for each previously inserted couple in the weakly connected component of $c$. By the definition of $IT(c,i-1)$ this implies $h\in IT(c,i-1)$.



\end{proof}

As an immediate corollary of Lemma \ref{same-insertions} we obtain that a couple causing another couple to be
evicted corresponds to an edge in the couples graph.

\begin{corollary}
\label{col:ins1}
If in an insertion order $\pi$ inserting the couple $c_{\pi(i)}$ causes the couple $c_{\pi(j)}$ to be evicted ($j<i$) then
in the couples graph there is an edge from $c_{\pi(i)}$ to $c_{\pi(j)}$.
\end{corollary}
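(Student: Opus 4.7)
The plan is to unpack the defining witnesses of the edge in $G(C,r)$ (a hospital together with two distinct doctors and a preference comparison) directly from the combinatorics of the eviction event, and then invoke Lemma~\ref{same-insertions} at each endpoint to certify membership in the corresponding influence trees. Since the statement is labeled as immediate, I expect no genuine obstacle; the work is entirely bookkeeping about which doctor is associated with the relevant hospital in each tree.

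First I would identify the witness hospital. By hypothesis, inserting $c_{\pi(i)}$ causes some member $d_j$ of $c_{\pi(j)}$ (which has already been placed during some earlier step since $j<i$) to be displaced. That displacement must happen at exactly one hospital $h$: the chain reaction set off by $c_{\pi(i)}$ eventually delivers some doctor $d_i$ to $h$, and $h$ responds by rejecting $d_j \in \mu(h)$. In particular $d_i \neq d_j$ (otherwise no rejection occurs at $h$) and $d_i \succ_h d_j$, since $h$ chose $d_i$ over $d_j$ from $\mu(h)\cup\{d_i\}$.

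Next I would apply Lemma~\ref{same-insertions} at both endpoints. For $c_{\pi(i)}$: its insertion influences $h$ (it pushes $d_i$ into $h$), so Lemma~\ref{same-insertions} gives $h\in IT(c_{\pi(i)},r)$, and tracing the inductive proof of that lemma the doctor paired with $h$ in the tree is exactly the doctor delivered to $h$ by $c_{\pi(i)}$'s chain, namely $d_i$; thus $(h,d_i)\in IT(c_{\pi(i)},r)$. For $c_{\pi(j)}$: the doctor $d_j$ sits at $h$ just before $c_{\pi(i)}$ is processed, and $d_j\in c_{\pi(j)}$, so $d_j$ was placed at $h$ either directly when $c_{\pi(j)}$ applied or as the end of $c_{\pi(j)}$'s own stabilize chain. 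Either way $c_{\pi(j)}$'s insertion influenced $h$, and Lemma~\ref{same-insertions} together with the same tracing argument yields $(h,d_j)\in IT(c_{\pi(j)},r)$.

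Finally I would assemble the pieces. Combining $(h,d_i)\in IT(c_{\pi(i)},r)$, $(h,d_j)\in IT(c_{\pi(j)},r)$, $d_i\neq d_j$, and $d_i \succ_h d_j$, Definition~\ref{def:couples-graph} gives a directed edge from $c_{\pi(i)}$ to $c_{\pi(j)}$ in $G(C,r)$, completing the proof. The only conceptual care needed is to observe that Lemma~\ref{same-insertions} tracks pairs $(h,d)$ rather than just hospitals, but this is already built into the recursive construction of $IT(c,r)$ in Definition~\ref{def:influence-tree}, where every hospital is added together with the doctor routed into it along the simulated chain.
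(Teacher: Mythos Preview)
Your argument is correct and follows exactly the route the paper has in mind: it invokes Lemma~\ref{same-insertions} at both endpoints to place the witness hospital in the two influence trees, then reads off the edge from Definition~\ref{def:couples-graph}. The only minor remark is that your second case for $c_{\pi(j)}$ (``as the end of $c_{\pi(j)}$'s own stabilize chain'') cannot occur, since couple members are seated once and never re-routed during Stabilize; but this does not affect the argument.
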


Since there exist a topological sort with a high probability Theorem \ref{thm:stable} follows from the following corollary:
\begin{corollary}
\label{cor:ins2}
Inserting the couples according to any topologically sort  $\pi$ of the couples graph gives a stable outcome.
\end{corollary}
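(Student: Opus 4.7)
The plan is to run the SoDA algorithm initialized with the topological sort $\pi$ and argue by induction on $i$ that, after inserting $c_{\pi(1)},\ldots,c_{\pi(i)}$, the algorithm never fails during the insertion of $c_{\pi(i)}$ and never needs to alter the permutation. Once this is established, the SoDA algorithm completes without failure under $\pi$, and as observed right after its description this implies the output is stable.

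Recall SoDA has exactly two failure modes on a given permutation: (a) the couple currently being inserted is itself rejected during the Stabilize step (step 3(b21)), and (b) the algorithm is instructed to restart with an ordering $\pi'$ that has already appeared in $\Pi$ (steps 2(a11), 2(a21), 3(b22)). A change of permutation is triggered only when the inserted couple $c$ causes the eviction of some previously inserted couple $c'\neq c$. Therefore, to prove the corollary it suffices to establish that throughout execution under $\pi$ (i) no couple evicts itself and (ii) no couple evicts a previously inserted couple.

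For (i), I would invoke the argument underlying Lemma~\ref{small-blob}: we have restricted attention to realizations where every influence tree $IT(c,r)$ contains each hospital at most once, so the chain of evictions that $c$ initiates during its insertion cannot return to a hospital occupied by the other member of $c$. Any self-eviction would exhibit the same hospital as being inserted into $IT(c,r)$ twice (once by the initial application of $c$, once by the incoming member), contradicting the no-self-intersection assumption.

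For (ii), I would use the induction hypothesis together with Corollary~\ref{col:ins1}. Suppose, during the insertion of $c_{\pi(i)}$, the Stabilize step evicts a member of some previously inserted couple $c_{\pi(j)}$ with $j<i$. Then by Corollary~\ref{col:ins1} the couples graph contains the directed edge $c_{\pi(i)} \to c_{\pi(j)}$. But $\pi$ is a topological sort of the couples graph, so every directed edge $c_{\pi(a)}\to c_{\pi(b)}$ satisfies $a<b$, contradicting $j<i$. Hence no such eviction occurs.

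The main subtlety is making sure Corollary~\ref{col:ins1} is applicable in the inductive step, because its proof relies on Lemma~\ref{same-insertions}, which itself was formulated for ``insertion up to the first eviction of a couple''. Since by the inductive hypothesis no prior insertion evicted a couple, the hypothesis of Lemma~\ref{same-insertions} is satisfied at the moment $c_{\pi(i)}$ is inserted, so Corollary~\ref{col:ins1} applies and closes the induction. Combining (i) and (ii), SoDA executes under $\pi$ without failure and therefore returns a stable matching.
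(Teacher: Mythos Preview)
Your argument is correct and is exactly the intended one: the paper states Corollary~\ref{cor:ins2} without proof, treating it as immediate from Corollary~\ref{col:ins1} together with the standing assumption that no $IT(c,r)$ self-intersects, and your write-up is simply a careful unpacking of that implicit reasoning (including the inductive point that Lemma~\ref{same-insertions} applies at each step because no prior eviction has occurred). One small remark: in part~(i) you should explicitly route the claim ``the chain initiated by $c$ is contained in $IT(c,r)$'' through Lemma~\ref{same-insertions} rather than only through Lemma~\ref{small-blob}, since the latter concerns $IT(c,r)$ itself while the former is what lets you transfer the no-self-intersection property to the actual SoDA run.
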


Finally, we can now analyze the running time of (a slight modification of) the SoDA algorithm. Note that with high probability
we have that the couples graph has small connected components (of size $<3/\epsilon$) and can be topologically sorted.
According to Corollary \ref{cor:ins2} each failed iteration of the SoDA algorithm is due to a backward edge in the
insertion order $\pi$. By recording the backward edge, and ensuring that all future attempts are consistent with it, we
can guarantee that at most $(3/\epsilon)^2\cdot n^{1-\epsilon}$ permutations will be tried before either a topologically
sorted order is arrived at, or a cycle in the couples graph is found.\footnote{It can be shown that the SoDA algorithm without
this modification will run with at most $(3/\epsilon)^{3/\epsilon}\cdot n^{1-\epsilon}$ iterations.}


\section{Incentive Compatibility}

In this section we will show that:
\begin{theorem} \label{thm:ic}
Ex post truthfulness: The probability that any doctor  can gain by misreporting her preferences is at most $O(n^{-\epsilon/2})$, even if the doctor knows the entire preference list.
\label{truthful}
\end{theorem}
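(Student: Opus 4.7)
The plan is to reduce the strategy-proofness of SoDA for a single doctor $s$ to the classical strategy-proofness of the Deferred Acceptance algorithm (Roth 1982). The central claim is that, with probability $1 - O(n^{-\epsilon/2})$ over the random preferences, the SoDA match of $s$ coincides with her DA match in the sub-market without couples, and moreover this equivalence holds simultaneously for every reported preference list $s$ might submit. Given such an equivalence, DA's strategy-proofness immediately transfers to SoDA and no misreport can strictly improve $s$'s outcome.

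First, I would show that with probability $1 - O(n^{-\epsilon/2})$ doctor $s$ does not lie in the influence tree $IT(c,r)$ of any couple $c$. By Lemma \ref{small-blob}, $\Pr(h \in IT(c,r)) = O(\polylog(n)/n)$ for any fixed hospital $h$; a union bound over the $|C^n| = O(n^{1-\epsilon})$ couples bounds the probability that her DA match $h^*$ belongs to any influence tree by $O(n^{-\epsilon}\polylog(n)) = O(n^{-\epsilon/2})$. On this event, none of the cascading evictions initiated by couples in the couple-insertion phase of SoDA touches $s$, so her SoDA match equals $h^*$, her DA match in the sub-market without couples.

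Second, I extend this equivalence to every possible report $\succ'_s$. The key structural observation is that varying $s$'s report only alters the DA outcome along a short rejection chain of length $\polylog(n)$ with high probability (by the main-path argument used in the proof of Lemma \ref{small-blob}). So the set of hospitals reachable by $s$ in DA, as her report varies, is controlled by the preferences of the agents other than $s$ alone, and a suitably generalized union bound shows that with probability $1 - O(n^{-\epsilon/2})$, no hospital that $s$ could be matched to under any report lies in any couple's influence tree. On this high-probability event, for every report $\succ'_s$ the SoDA match of $s$ equals the DA match in the sub-market with $(\succ'_s,\succ_{-s})$, so by DA strategy-proofness the truthful SoDA match weakly dominates every misreported SoDA match.

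The main obstacle is the uniformity over misreports in the second step, since a naive union bound over the exponentially many possible preference lists is infeasible. The argument must instead exploit the structural independence between $s$'s report and the preferences of the other agents on which the influence trees are built, together with the fact that $s$'s report only perturbs the DA outcome along a short rejection chain. Once this uniform statement is established, the theorem follows at once from DA's strategy-proofness.
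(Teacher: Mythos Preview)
Your reduction to DA strategy-proofness via the claim ``SoDA match $=$ DA match simultaneously for every report'' is not the route the paper takes, and the uniformity step you flag as the main obstacle is a genuine gap that you have not closed. The difficulty is worse than you indicate: the influence trees $IT(c,r)$ are defined relative to the DA matching $\mu$ in the sub-market without couples, and $\mu$ itself changes when $s$ changes her report. So the ``structural independence between $s$'s report and the preferences \ldots\ on which the influence trees are built'' that you invoke does not hold as stated; both the target hospital and the influence trees move with the report. Your short-rejection-chain heuristic might be salvageable, but it would require controlling, uniformly in $\succ'_s$, the difference between the influence trees under $\mu(\succ'_s)$ and under $\mu(\succ_s)$, and you have not done this.

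The paper sidesteps the uniformity problem entirely. It conditions only on a single event depending on the \emph{truthful} report: that $s$'s truthful DA match $h$ lies in no couple's influence tree (probability $1-O(n^{-\epsilon/2})$ by the union bound you already gave). On this event the paper argues directly that no misreport can yield any $h'\succ_s h$, as follows. Run the DA stage inserting $s$ last; since $s$ truthfully ends at $h$ and $h'\succ_s h$, the hospital $h'$ must already be filled with doctors it prefers to $s$ before $s$ is inserted. Now switch to any misreport and any valid (i.e.\ non-failing) execution of SoDA: in the final permutation no couple is evicted, so from the moment $s$ is inserted onward the process is a pure deferred-acceptance cascade in which each hospital's least-preferred tenant only improves. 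Hence $h'$ never admits $s$. This monotonicity argument handles all misreports at once without any union bound over reports, and it is what you are missing.

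A smaller omission: the theorem also covers couples, and the paper proves a companion statement (a couple whose influence tree is disjoint from all others cannot gain by misreporting) via the same monotonicity idea. Your proposal addresses only singles.
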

A similar result can be shown for hospitals, using similar techniques as in the proof of Theorem \ref{truthful}. We avoid the exact details here.\footnote{In particular one will need to define influence trees for hospitals, show that with high probability a hospital does not encounter any couple, and with a bit of effort apply Lemma  10 in \cite{KojimaParag} which asserts the desired result for hospitals in markets without couples.}
Together with Theorem \ref{truthful} we obtain that reporting truthfully is a $\delta$-Bayes Nash equilibrium in the Bayesian game induced by the SoDA algorithm (assuming bounded utilities). We refer the reader for exact definitions of the Bayesian game to \cite{RothCouples}.

Throughout this section we will use the same assumptions as in the previous section about the influence trees. They hold except with probability $O(n^{-\epsilon/2})$.
 Informally, we will show that if a doctor or a couple doesn't interact with any other couple's
influence tree, then she does not have an incentive to deviate. To this end we show:

\begin{lemma} \label{lem:ic}
 Let $d\in S$ be any doctor. Suppose that the SoDA algorithm terminates and assigns $d$ to a hospital $h$ in the first (Deferred Acceptance) stage of the algorithm. Suppose that $h$ does not belong to
 any of the couples' influence trees. Then $d$ may not improve her allocation under SoDA by misrepresenting her preferences.

 Similarly, if $c\in C$ is a couple whose influence tree is disjoint from all other influence trees, then
 $c$ may not improve their allocation under SoDA by misrepresenting their preferences.
\end{lemma}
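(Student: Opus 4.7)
The approach is to combine Roth~(1982)'s strategy-proofness of the Deferred Acceptance algorithm for doctors with a monotonicity property of SoDA: within a single (successful) execution attempt, once the initial DA stage terminates, any single doctor displaced during couple insertions continues to apply strictly down her own preference list (see Step~3(b1): she never revisits a rejecting hospital). Hence, throughout the couple-insertion phase her assignment is weakly worse, in her true preference order, than her post-DA assignment (becoming unassigned if she runs out of hospitals, which is worse still).

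For the single doctor $d$: let $\mu_0$ denote the DA matching on singles under truthful reporting. By hypothesis $\mu_0(d)=h$ and $h$ is not in any couple's influence tree, so by Lemma~\ref{same-insertions} no couple's insertion ever affects $h$, and $d$'s SoDA outcome is $h$. For any misreport $\succ_d'$, the strategy-proofness of DA yields that $d$'s DA outcome under $\succ_d'$ is some $h^*$ with $h^*\preceq_d h$. By the monotonicity property, $d$'s final SoDA outcome under $\succ_d'$ is then $\preceq_d h^*\preceq_d h$. Note that the possibility that SoDA restarts Step~1 multiple times is harmless since the DA matching is a deterministic function of the submitted preferences and is identical across restarts.

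For the couple $c$ whose influence tree is disjoint from all others, the plan has two parts. First, a locality argument: Lemma~\ref{same-insertions} together with the disjointness hypothesis implies that (i) no other couple's insertion affects any hospital in $IT(c,r)$, and (ii) $c$'s own insertion affects no hospital outside $IT(c,r)$. Consequently, at the moment $c$ applies to each of its top accepting pairs, the matching at the hospitals involved agrees with $\mu_0$ there. Therefore $c$'s truthful SoDA outcome $(h_1,h_2)$ is precisely the $\succ_c$-top pair at which $c$ would be accepted when inserted directly into $\mu_0$. Second, a local strategy-proofness argument: the set of pairs at which $c$ would be accepted into $\mu_0$ depends only on hospital preferences, on the identities $f_c,m_c$, and on $\mu_0$, but \emph{not} on $c$'s reported preference list. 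Hence any reported list can only place $c$ at a pair in this fixed set, and by definition $(h_1,h_2)$ is the $\succ_c$-best such pair.

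The main obstacle is that under a misreport $\succ_c'$ the couple's influence tree, and through it possibly the couples graph and SoDA's chosen permutation, can change, so the clean locality picture need not hold for the misreport execution itself. We overcome this by observing that whatever pair $(h',h'')$ misreport-SoDA assigns to $c$ must be an accepting pair for $c$ given the matching at the moment of application, and that the contribution to this matching from parties other than $c$ (the singles' DA outcome and the other couples' insertions) is determined by data independent of $c$'s reported preferences. Combined with the truthful-side disjointness, this shows that the accepting pairs seen by $c$ in misreport-SoDA are a subset of those accepting under $\mu_0$, reducing back to the local strategy-proofness argument. If misreport-SoDA fails outright, then $c$ is left unassigned, which is trivially not an improvement over $(h_1,h_2)$.
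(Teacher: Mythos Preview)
Your central tool---the observation that during the couple-insertion phase a displaced single applies strictly down her preference list---is correct, but you apply it incorrectly in the misreport case. Under a misreport $\succ_d'$, the doctor $d$ applies down her \emph{reported} list $\succ_d'$, so her final outcome is $\preceq_{d'} h^*$, not $\preceq_d h^*$. Nothing prevents $d$ from placing a hospital $h''$ with $h'' \succ_d h$ below $h^*$ on $\succ_d'$; if she is displaced from $h^*$ during couple insertion and later lands at $h''$, she has improved in her true preference. Your chain $\preceq_d h^* \preceq_d h$ therefore breaks at the first inequality: doctor-side monotonicity is in the reported order, not the true order.

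A parallel gap appears in the couples part. Your claim that ``the contribution to this matching from parties other than $c$ is determined by data independent of $c$'s reported preferences'' is not justified: under a misreport, $c$'s insertion can touch hospitals in other couples' influence trees, triggering reorderings of $\pi$ and altering the state at the moment $c$ applies. Truthful-side disjointness of $IT(c,r)$ says nothing about the misreport execution.

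The paper avoids both pitfalls by working on the \emph{hospital} side rather than the doctor side. The key fact is that within any successful SoDA attempt, starting from the singles-only DA matching, the least-preferred assigned doctor at each hospital (according to that hospital's preference) can only weakly improve, since every change is a displacement of a worse doctor by a better one and no couple is evicted. This is insensitive to what $d$ or $c$ reports. For the single $d$, the paper reorders so that $d$ is the last single inserted; the matching $\mu_{-d}$ of the other singles is then literally independent of $d$'s report, and one argues that any $h'\succ_d h$ already has a roster too strong for $d$ and keeps it. For the couple $c$, the paper reorders (using truthful disjointness) so that $c$ is first among couples, reads off that any better pair $(h_1',h_2')$ already rejects $c$ at the level of $\mu_0$, and then uses hospital-side monotonicity to conclude that this rejection persists through any execution, misreport included. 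Your ``subset of accepting pairs'' conclusion for couples is in fact exactly this statement, but it needs hospital-side monotonicity from $\mu_0$ to justify it, not an independence claim about the intermediate state.
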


\begin{proof}
We start with the first statement. At the end of the execution of the first stage of the SoDA algorithm $d$ ends up in $h$. By Lemma \ref{same-insertions}, if $d$ was moved from $h$,
in the second stage, then $h$ must belong to the influence tree of one of the couples, contradicting the assumption. Hence at the end of the SoDA algorithm $d$ is still assigned
the hospital $h$.

Suppose that $d$ misrepresents her preferences and obtains a hospital $h'$ such that $h'\succ_d h$ in a valid execution of the SoDA algorithm. It is well known that
the outcome of the (regular) Deferred Acceptance algorithm on singles does not depend on the insertion order. Hence we can execute the SoDA algorithm so that $d$
is the last single doctor to be inserted. Just before $d$ is inserted, for all doctors $d'$ that are assigned to $h'$, $d'\succ_{h'} d$, otherwise $d$ would have been assigned
$h'$ when stating her true preferences. From that point on, a valid execution of the SoDA algorithm does not lead to any couples being evicted, and hence the quality of
the least preferred doctor in $h'$ according to $\succ_{h'}$ may only improve. Hence $d$ may not be assigned to $h'$ in the second phase of the SoDA algorithm. Contradiction.

Next, let $c=(f,m)$ be a couple such that $IT(c,r)$ is disjoint from all other influence trees. Suppose that $c$ is assigned the hospitals $(h_1,h_2)$ is a valid execution of the
SoDA algorithm with an ordering $\pi$ on couples. Since $IT(c,r)$ is disjoint from other influence trees, by Lemma \ref{same-insertions} we see that inserting the couples
in the order $\pi'$ obtained from $\pi$ by putting $c$ first, leads to another valid execution that results in the same allocation.

Suppose that $c$ misrepresent their preferences and obtain the hospitals $(h_1',h_2')\succ_c(h_1,h_2)$ in a valid execution of the SoDA algorithm. Note that
the couple $c$ was the first to be inserted under $\pi'$ and did not get accepted into $(h_1',h_2')$ because one of the hospital preferred all the doctors that
were assigned to it in the DA stage of the algorithm to the corresponding couple member. Without loss of generality, assume that $h_1'$ preferred all of its assigned
doctors to $f$. As in the single doctor case above, in the second phase of the SoDA algorithm the least preferred doctor according to $\succ_{h_1'}$ that is
assigned to $h_1'$ may only improve. Thus $f$ may never be assigned to $h_1'$. Contradiction.
\end{proof}

Using Lemma \ref{lem:ic} we can now prove Theorem \ref{thm:ic}.

\begin{proof} {\bf (of Theorem \ref{thm:ic}).}
Fix any doctor $d\in S$ and the hospital $h$ it is assigned in the DA stage of the SoDA algorithm. By an argument very similar to Lemma \ref{small-blob} we can show that
 the probability that any influence tree contains $h$ (or any other hospital in the influence tree of $d$) is bounded by $O(S(r)^2/n^\epsilon) < n^{-\epsilon/2}$.
 By Lemma \ref{lem:ic}, if this is the case, $d$ does not have an incentive to deviate.

 Similarly, the probability of the influence trees of two couples intersecting is bounded by $O(S(r)^2/n)$, and thus for each couple $c$, the probability that
 $IT(c,r)$ is disjoint from all other influence trees -- and thus $c$ has no incentive to deviate -- is at least $1-O(S(r)^2/n^\epsilon)>1-O(n^{-\epsilon/2})$.
\end{proof}

\ignore{
\newpage

Our goal is to show that assuming the couples graph contains
no cycles, the output of the SoDA algorithm is canonical. In other words, it does not depend on the insertion order
of the couples. Thus no player may misrepresent her preferences to gainfully change the insertion order. The fact that
she also cannot gain by misrepresenting her preferences when the insertion order is fixed follows similarly to
the incentive compatibility of the ordinary DA algorithm.

The idea for the proof is the following. We will first prove that Theorem \ref{truthful} holds using the following  slightly modified algorithm:
{\it \emph{Topological Sort Algorithm:}
\begin{enumerate}
\item Construct the  couples graph.
\item If the couples graph has either large weakly connected components (larger than $\frac{3}{\epsilon}$) or cycles then the algorithm fails and no doctor is assigned. Otherwise find a topological sort $\pi$ in the couples graph and insert the couples according to the couples graph.
\item Run the SoDA algorithm  beginning with the order $\pi$.
\end{enumerate}}
Then we will show that the SoDA algorithm provides with high probability the same outcome as the Topological Sort Algorithm.

As a first step we show:
\begin{lemma}
Given that the Topological Sort Algorithm reaches its third stage, all topological sorts of the couples graph give the same outcome in the Topological Sort Algorithm.
\end{lemma}
\begin{proof}
Let $\pi$ be a topological sort. Note that the permutation resulting from switching two adjacent couples $c$ and $c'$ that have no edge from one to the other  is also a topological sort. Further more, any topological sort $\pi$' can be obtained from $\pi$ by a sequence of such switches. Therefore it is enough to prove that that switching two such adjacent couples $c$ and $c'$ in  $\pi$ then the outcome does not change. Let $H_{c}$ and $H_{c'}$ be the sets of hospitals $c$ and $c'$ influenced respectively when inserting them according to $\pi$. Since there is edge between these couples, by Lemma \ref{same-insertions} the $H_c\cap H_{c'}=\phi$.  Thus, exchanging the places of $c$ and $c'$ in $\pi$ will not change the outcome.
\end{proof}

\begin{lemma}
Ex post truthfulness in Topological Sort Algorithm:  The probability that any doctor  can gain by misreporting her preferences in the mechanism induced by the Topological Sort Algorithm is at most $n^{-\epsilon}$, even if the doctor knows the entire preference list.
\end{lemma}
\begin{proof}
  We  assume that if there are cycles in the couples graph, or of there are large weakly connected components, then no doctor is assigned to any hospital (this a low probability event if all doctors are truthful). By misreporting her preferences, a doctor $d$ (or a couple $c$) can modify the graph $G$ to a graph $\tilde G$. The following conditions hold:

  \begin{enumerate}
    \item If $G$ has cycles or large weakly connected components, then by lying a doctor may improve her situation: for example if she can modify $G$ to a graph $\tilde G$ with no cycles and large components (as $G$ would result in a null outcome). However, this is a low probability event.
    \item If $G$ had no cycles or large weakly connected components and $\tilde G$ does, then the doctor didn't gain anything by lying (the null outcome is worse than any outcome)
    \item Suppose both $G$ and $\tilde G$ have no cycles and large connected components. In this case, by misreporting her preferences a doctor can change the insertion ordering.

        @TO COMPLETE THE SKETCH@

        We need to define topological sort with $r'$ blobs and length $r$ chains $r'\geq r$.

        If we take 2r blobs the longest chain will be of length $\leq r$.

        If there exist topological sort with chains $\leq r$, then all insertion orders where couples are moved are equivalent.

        Suppose $c_1$ lies and gains, and instead of $h_0$ gets $h_0'$. Let $w$ be the true preference. In $w$ the adversary spent at most $r$ before $h_0'$.

        By the assumption that he gains when lying he there is some legal insertion order (no couples are kicked) and he gets $h_0'$. Let $w'$ the preference $h_0'$. Here he still gets $h_0'$ with the same insertion order of couples as in his original lie. Therefore $Blob_r^{\tilde{w}}(c_1)\subseteq Blob_{2r}^w(c_1)$ implying that the topological sort is ``good enough"" for $\tilde{w}$.

  \end{enumerate}
\end{proof}

The SoDA algorithm  may eventually  insert the couples in a way which would not correspond to a topological sort ordering. We will argue that in this case, the outcome is identical to the one of inserting according a topological ordering. The proof of Theorem \ref{truthful} follows from the following lemma.
\begin{lemma}
Suppose $S$ is an ordering of the couples such that when inserting according to $S_1$ in Algorithm 2, no couple kicks out any previously inserted couple. Let $\mu$ be the matching produced algorithm 2 under $S$ and let $\mu'$ be the matching produced by the topological sort ordering $\Pi$. Then $\mu=\mu'$.
\end{lemma}
\begin{proof}
 It is enough to show that if $S$ contains two adjacent  $c$ and $c'$ such that there is no edge between them in the couple graph then switching between these couples will result in the same matching $\mu$.

Let $S_1$ be an ordering that results from such a single switch.
Thus we can write
\[S = (c_1, c_2, \ldots , c_{k}, c, c' \ldots)\]
and
  \[S_1 = (c_1, c_2, \ldots , c_{k}, c', c \ldots.)\]

After inserting $c_1,\ldots,c_k$ the outcome is identical.
We next argue that $c, c'$ do not kick any couples, and end up being in the same place.

  If $c'$ gets a better hospital in $S_1$ then in $S$, then there must be an edge $c' \rightarrow c$ - a contradiction. On the other hand, $c'$ can not get a worse hospital in $S_1$ than in $S$ @WHY@. Thus, $c'$ gets the same hospital under both orderings. Since there is no edge between $c$ and $c'$, also under $S_1$, $c$ and $c'$ will not kick out any couple $c_1,\ldots,c_k$ and furthermore $c$ will get also the same hospital. Finally, we obtain that after inserting all couples up to $c$ and $c'$ the outcome is identical, yielding that the insertion of the remaining couples will not result in the same outcome under both $S$ and $S_1$.
\end{proof}

\newpage
}

\section{Simulations}
\label{sec:simul}

In this section we provide simulations results using the SoDA algorithm . In particular we performed sensitivity analysis  on various parameters of the problem. For each configuration we  ran 600 trials. We assumed there are $\frac{n}{2}$ hospitals where $n$ is the number of singles and each hospital has capacity of 3.\footnote{The results can  be slightly improved by  randomizing a new insert  order  each time the algorithm fails (doing this a small arbitrary number of times).}

In the first simulation we fixed the percentage of couples in the market and found the success rate of finding a stable matching.
For comparison, in the NMRP match in 2010  the number of (U.S) doctors was about 16,000 where as the number of couples was about 800.\footnote{In fact in the NRMP more than 20,000 doctors participate, but 16,000 are from the US and are ranked higher in the match.}  As Figure \ref{fig:prectanges} shows that the ratio   of doctors that are members of couples  plays a crucial role in the probability that a stable match will be found.  Note that although the number of singles grows (and the number of couples is linear) the probability for finding a stable match appears to remain unchanged.

\begin{figure}[htb]
\centering
{\includegraphics[width=9.5cm,height=5.5cm]{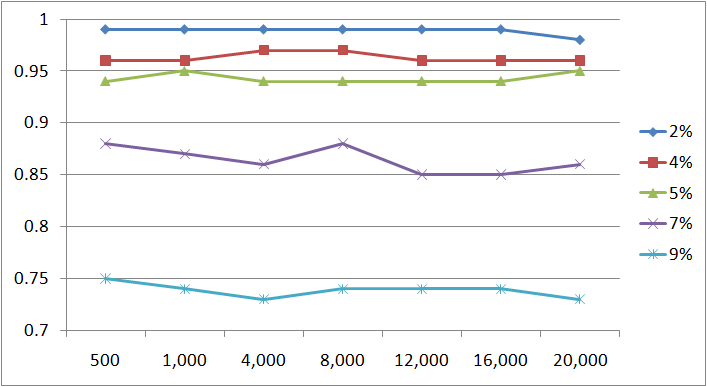}}
\caption{The success rate for finding a stable outcome given the number of singles (x-axis), for different couples percentages (5\% means that 10\% of the doctors are members of couples).}
\label{fig:prectanges}
\end{figure}

Next we fixed $\epsilon$, i.e. the number of couples is $n^{1-\epsilon}$. Figure \ref{fig:epsilon} shows that the probability for finding a stable match with SoDA increases and is roughly concave in the number of singles. Observe that the rate of convergence is  different for various $\epsilon$'s.

\begin{figure}[htb]
\centering
{\includegraphics[width=10cm,height=6cm]{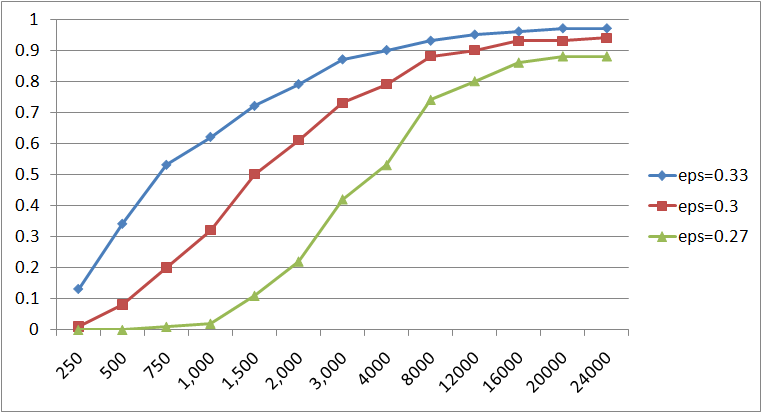}}
\caption{The success rate in finding a stable outcome given the number of singles (x-axis), where the number of couples is $n^{1-\epsilon}$ for three different $\epsilon$'s.}
\label{fig:epsilon}
\end{figure}

In the next simulation (see  Figure \ref{fig:regHist}) we fixed the number of singles and the number of couples to be 16,000 and 800 respectively as in the NMRP, and found the percentage of singles and couples that get their $k$-th most preferred choice. We assumed that there is no fitness, i.e. preference distributions of both doctors and hospitals are uniform.

\begin{figure}[htb]
\centering
{\includegraphics[width=10cm,height=6cm]{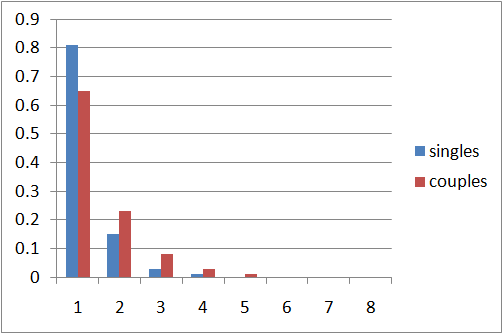}}
\caption{The histogram shows the percentage of singles and couples that got their k-th favorite choice for each
$k=1,\ldots,8$.}
\label{fig:regHist}
\end{figure}

\begin{figure}[htb]
\centering
{\includegraphics[width=10cm,height=6cm]{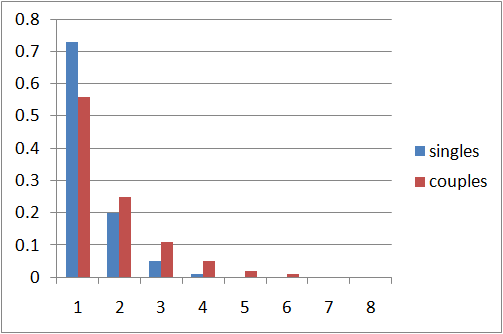}}
\caption{The histogram shows the percentage of singles and couples that got their k-th most preferred choice for each
$k=1,\ldots,8$. Hospitals have a fitness score.}
\label{fig:hospFitHist}
\end{figure}

In Figure \ref{fig:hospFitHist} we provide the same histogram but adding fitness to hospitals; each hospital has been assigned a score uniformly at random from the interval $[0.2,1]$. To decide the next preference of a doctor, she randomizes uniformly a hospital $h$ and a number from $[0.2,1]$, and if $h$'s score is below the number, the doctor resamples such a pair.

\section{`Almost' Linear is Necessary}

In Section \ref{sec:stability} we showed that the SoDA algorithm  finds a stable matching with probability  approaching  1 as $n$ tends to infinity assuming the number of couples is growing at a rate of $n^{1-\epsilon}$ (for any  $0<\epsilon<1$). In Section \ref{sec:simul} we saw that when the number of couples is a constant fraction of the total capacity, there is a constant probability of failure.
One might suggest that the SoDA algorithm does not search through enough permutations and if it fails there might still exist a stable matching.
We show  that not only SoDA will fail with constant probability but also any other algorithm, i.e.  with constant probability a stable matching does not exist.
For simplicity we will consider only uniformly distributed preferences and a capacity of 1 for each hospital.

\begin{theorem}
  Consider a random matching market with $n$ couples and  $n$ singles, $\lambda n$ hospitals for sufficiently large  $\lambda$  each of capacity 1, and preferences   distributed uniformly.  Then with some probability $delta > 0$ not depending on $n$, no stable matching exists.
  \footnote{The result is true also for $\alpha n$ couples for any constant $\alpha>0$.}
  \label{thm:linear1}
\end{theorem}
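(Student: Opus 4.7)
The plan is to exhibit a constant-size ``obstruction gadget'' that, with constant probability over the random preferences, appears within the market and forces non-existence of any stable matching. The starting point is the Klaus--Klijn-style configuration used in Proposition~\ref{thm:linear2}: a couple $c=(f,m)$, a constant number of singles $s_1,\dots,s_k$, and a constant number of hospitals $h_1,\dots,h_\ell$ with preferences arranged so that (a) the couple's top few pairs form a preference cycle, (b) each gadget hospital $h_j$ prefers $m$, then some singles $s_i$, then $f$, and (c) each gadget single $s_i$ prefers the gadget hospitals in a cyclic pattern that meshes with the couple's preferences to rule out every candidate assignment on this restricted set. I would either re-use the Klaus--Klijn counterexample directly or craft a slightly larger variant that still has $O(1)$ agents and zero stable matchings when restricted.

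Second, I would estimate the probability that a fixed tuple of agents $(c, s_1,\dots,s_k, h_1,\dots,h_\ell)$ realizes this pattern. The relative ordering constraints among $O(1)$ fixed agents (for example, $m \succ_{h_j} s_i \succ_{h_j} f$, or a specific cyclic ordering of gadget hospitals on gadget singles' lists) each hold with some constant probability independent of $n$, since uniform sampling from $H^n$ induces a uniform relative order on any fixed constant set. On the other hand, the constraint that the $\ell$ gadget hospitals appear in the top $O(1)$ positions of the $k+2$ gadget doctors' lists costs a factor $\Theta(n^{-a})$ with $a$ constant. So a specific tuple yields a gadget with probability $p(n)=\Theta(n^{-a})$.

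Third, I would count candidate tuples: with $n$ couples, $n$ singles, and $\lambda n$ hospitals, the number of tuples of the required shape is $\Theta(n^{b})$ where $b = 1 + k + \ell$. Choosing the gadget so that $b > a$ (concretely, putting the single multiplicative slack on the couple and on the hospital choices while keeping each gadget-doctor preference-list constraint ``cheap'') makes the expected number of gadgets $\Omega(n^{b-a})\to\infty$. A second-moment calculation shows that disjoint tuples give nearly independent events (the gadget is defined by finitely many coordinates of the preference lists), so Chebyshev yields constant probability $\delta>0$ that at least one gadget appears. A Chen--Stein / Poisson approximation gives the same conclusion more cleanly.

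The main obstacle is the last step: \emph{localization}, i.e.\ arguing that a local obstruction forces global non-existence. In the random setting, the gadget doctors have long preference lists, so one must rule out that a stable matching ``escapes'' by sending $f$ or $m$ far outside the gadget. My plan is to build into the gadget that the gadget hospitals are top choices of the couple and that, at the gadget hospitals, the couple's members dominate all but a small number of singles (again a constant-probability relative-ordering event). Then, conditional on the bulk of the market being well-behaved---which holds with probability $1-o(1)$ thanks to excess capacity $\lambda n$ and Lemma-style chain-length bounds as in Section~\ref{sec:stability}---the only singles with any chance to land at a gadget hospital are gadget singles plus $O(1)$ others, because every outside single finds a better ``outside'' hospital via Deferred Acceptance. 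Under this event the gadget becomes effectively self-contained: any candidate stable matching must assign the gadget doctors within the gadget, but by construction the gadget has no stable matching. Taking the intersection with the $1-o(1)$ ``well-behaved outside'' event, and choosing $\lambda$ sufficiently large to suppress outside blocks with respect to gadget hospitals, the probability of non-existence is at least $\delta/2$ for all large $n$, proving the theorem.
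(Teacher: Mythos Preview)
Your high-level plan---exhibit a constant-size obstruction gadget and then argue it is isolated from the rest of the market---is exactly the paper's strategy, and your probability estimate for the gadget itself is essentially correct (the paper uses the minimal gadget: one couple, one single, two hospitals, with the Klaus--Klijn relations $s\succ_{h_1} m_c$, $f_c\succ_{h_2} s$, and $h_2\succ_s h_1\succ_s \cdots$). However, your localization step has a genuine gap.

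First, you invoke ``Lemma-style chain-length bounds as in Section~\ref{sec:stability}'' to argue that the outside market is well-behaved with probability $1-o(1)$. Those lemmas are proved under the hypothesis $|C^n|=O(n^{1-\epsilon})$; the whole point of the present theorem is that we are in the regime of $\Theta(n)$ couples, where that analysis breaks down. You cannot import those bounds here. Second, even granting some well-behavedness, your isolation argument runs through Deferred Acceptance (``every outside single finds a better outside hospital via DA''), but the theorem concerns \emph{all} stable matchings, and with couples there is no lattice structure to transfer a DA statement to an arbitrary stable matching. Third, the side condition you propose---that at the gadget hospitals the couple's members ``dominate all but a small number of singles''---is not a constant-probability event once ``small'' means $O(1)$ out of $\Theta(n)$ doctors; and if ``small'' means a constant fraction, it no longer isolates the gadget.

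The paper handles localization by a different, self-contained device: a \emph{3-Pessimistic DA} process in which every player (single or couple) keeps applying until it holds three disjoint tentative assignments, and a hospital that receives a second application rejects \emph{both} doctors. A supermartingale/Azuma argument shows that with constant probability this process visits at most $\alpha n<\lambda n$ hospitals in total, hence with constant probability the two gadget hospitals are never touched by any outside doctor, even with $\Theta(n)$ couples. One then argues combinatorially that in \emph{any} stable matching no player can be strictly worse than its third pessimistic assignment (a counting argument: $k$ ``poor'' players would need $3k$ of these disjoint hospitals to be filled by at most $2k$ poor doctors). Combining these, with constant probability the gadget hospitals are only reachable by the gadget couple and single, and the Klaus--Klijn cycle kills every stable matching. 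This is the missing idea in your proposal: a mechanism, independent of the Section~\ref{sec:stability} machinery, that simultaneously controls all stable matchings in the linear-couple regime.
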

\begin{proof}
Consider the following event $E$: there exist a couple $c=(m_{c},f_{c})\in C$, a single doctor $s\in S$ and two hospitals $h_1\neq h_2$ so that the most preferred pair of hospitals by $c$ is $(h_1,h_2)$ and the following properties hold:
\begin{enumerate}
\item[(i)]
$h_2\succ_{s}h_1\succ_{s}h$ for any $h\notin\{h_1,h_2\}$.

\item[(ii)] $s\succ_{h_1} m_{c}$.

\item[(iii)] $f_{c}\succ_{h_2} s$.
\end{enumerate}
Observe that if only the couple $c$ and the single doctor $s$ existed  no stable matching would exists.

The proof will follow by first bounding (from below) the probability of the event $E$ and then bounding (from above) the event that some other doctor except  those in the event $E$ ever obtains either $h_1$ or $h_2$ in any  stable matching.

Fix a couple $c\in C$ and a single $s$ and let $(h_1,h_2)$ be the pair of hospitals most preferred by $c$. The probability that $h_1\neq h_2$,  and properties (ii) and (iii) hold is $\delta>\frac{1}{2}\cdot\frac{1}{2^2}$.  The probability that $h_1\neq h_2$ and properties  (i)-(iii) hold is $\Omega\left(\delta\frac{1}{(\lambda n)^2}\right)=\Omega\left(\frac{1}{n^2}\right)$. Therefore, since there are $n$ couples the probability that for a given single $s$  there exists a couple $c$ such that  $h_1\neq h_2$ and properties  (i)-(iii) hold is $\Omega\left(\frac{1}{n}\right)$. Therefore since there are $n$ singles, the probability that there exist a single  $s$ such that the event $E$ holds is some constant $\gamma>0$.

Suppose the event $E$ occurs with the couple $c'$ and doctor $s'$ and let $D'=D\setminus\{f_{c'},m_{c'},s'\}$.
Consider the following application/rejection algorithm in which doctors are assigned to $l>0$ positions (rather than 1):

{\it $l$-Pessimistic DA: At each step $t=1,2\ldots,$ either a single doctor $s\in S$ or a couple $c\in C$  that has less than $l$ temporary \emph{assignments} are chosen at random  and  apply to the most preferred hospital or pair of hospitals on their list respectively that they haven't applied so far. Each hospital assigns a doctor $d$ if and only if no other doctor is currently assigned to $h$ and no other doctor applied at this step to $h$. If some doctor $d$ applies to $h$ and some  doctor $d'$ (could be that $d'=d$) is temporarily assigned to $h$, $h$ rejects both $d$ and $d'$.}\footnote{As usual if a member of a couple is rejected from some hospital, its other member is also rejected.}

We will first show that the probability that any doctor but $f_{c'},m_{c'}$ and $s'$ ever applies  to $h_1$ or $h_2$ in the 3-Pessimistic DA process is bounded from above by a small constant. We will show a stronger lemma:

\begin{lemma} With constant probability no more than $\alpha n$ hospitals are visited in this process  for some $\alpha<\lambda$ in the 3-Pessimistic DA.
\label{lemmaPDA}
\end{lemma}
\begin{proof}
Let $L=\{0,1,2,3\}$.  For every $q\in L$ we say that a doctor is $q$-{\it settled} if it is temporarily assigned to exactly $q$ positions and we say that a hospital $h$ is {\it visited} if some doctor applied to it during the $3$-Pessimistic DA process.

For every $t=0,1,2\ldots,$ and every $q\in L$ denote by $A_t^q$ the number of $q$-settled doctors at step $t$, by $V_t$ the number of visited hospitals up to step $t$, where $A_0^0=3n$, and $A_0^1=A_0^2=A_0^3=V_0=0$.
Let $Y_t=V_t+15 A_t^0+10 A_t^1+5 A_t^2$ and consider the  process $X_t=Y_t+t$ for every $t=0,\ldots,\min(J,K)$, where $K$ is the first step in which $V_K=\frac{\lambda n}{10}$ and $J$ is the first step in which $A_J^0=A_J^1=A_J^2=0$ (i.e. $A_J^3=3n$).

\noindent{\bf Claim:}   $X_1,X_2\ldots,$ is a super-martingale, that is for every $t>0$, $E[X_{t+1}|X_1,\ldots,X_t]\leq X_t$.

\noindent{\bf Proof:}  Suppose a couple $c$ is chosen at step $t$ and has $q\in L\setminus\{3\}$ temporary assignments.  If it applies to two unvisited hospitals then $A_{t+1}^{q+1}=A_t^{q+1}+2$ and $A_{t+1}^q=A_t^q-2$ and $A_{t+1}^{q'}=A_t^{q'}$ for $q'\in L\setminus\{q,q+1\}$. Thus the contribution of the couple to $Y_t$ drops by $10$.
 If $c$ applies to an unvisited hospital and one  visited hospital then for every  $q\in L$, $A_{t+1}^q\leq A_{t}^q+2$ since at most one other couple lost a temporary assignment.  If it applies to two  visited hospitals then for every $q\in L$, $A_{t+1}^q\leq A_{t}^q+4$ since at most 2 additional couples lose a temporary assignment. For singles similar bounds can be used. For each $q=0,1,2$ let $Q_t^q$ be the event that at the beginning of step $t$ a couple with $q$ temporary assignments is chosen, and by  $W_t^q$ the event that a single with $q$ temporary assignments is chosen. Therefore for every $q\in L\setminus\{3\}$
\begin{multline*}
E[X_{t+1}|X_1,\ldots,X_t,Q_{t+1}^q]=E[X_{t+1}|X_t, Q_{t+1}^q] \leq \frac{(\lambda n - V_t)^2}{(\lambda n)^2}\left(V_t + 2 + 15 A_t^0+10 A_t^1+5 A_t^2 -10 \right) + \\ 2\cdot \frac{(\lambda n - V_t)V_t}{(\lambda n)^2}\left(V_t+1+15 A_t^0+10 A_t^1+5 A_t^2 +10\right) + \\ \frac{V_t^2}{(\lambda n)^2}\left(V_t+15 A_t^0+10 A_t^1+5 A_t^2+20\right)+t+1\leq V_t+ 15 A_t^0+10 A_t^1+5 A_t^2+t,
\end{multline*}
where the last inequality holds for any $V_t\leq \frac{\lambda n}{10}$. Similarly,
\begin{multline*}
E[X_{t+1}|X_1,\ldots,X_t,W_{t+1}^q]=E[X_{t+1}|X_t, W_{t+1}]\leq \frac{(\lambda n - V_t)}{\lambda n}\left(V_t +1 + 15 A_t^0+10 A_t^1+5 A_t^2-5\right)  + \\ \frac{V_t}{\lambda n}\left(V_t+15A_t^0+10A_t^1+5A_t^2+10\right)+t+1\leq V_t+ 15 A_t^0+10 A_t^1+5 A_t^2+t.
\end{multline*}
Therefore since either a couple or a single are chosen at each step, we obtain that $E[X_{t+1}|X_t]\leq V_t+ 15 A_t^0+10 A_t^1+5 A_t^2+t$.
$\Box$

As argued in the claim $|X_{t+1}-X_{t}|<22$ for every $t>1$.  Therefore by Azuma-Hoeffding's inequality we obtain that for any 
$T\geq 1$

\[\Pr\left(V_T-V_0\geq \frac{\lambda n}{10}\right)\leq \Pr\left(X_T-X_0\geq \frac{\lambda n}{10}-45n+T\right)\leq e^{-\frac{(\frac{\lambda n}{10}-45n+ T)^2}{968T }}<1-\beta,\]
for some constant $\beta>0$ and a sufficiently large $\lambda$, i.e. with constant probability the process will never reach $\frac{\lambda n}{10}$ visited hospitals.
\end{proof}

Lemma \ref{lemmaPDA} provides that in the 3-Pessimistic DA process described above, the number of hospitals visited is with constant probability only a  fraction of the total hospitals will be  visited, also implying that the doctors in the process (all but $c'$ and $s'$) will never visit $h_1$ and $h_2$.

Lemma \ref{lemmaPDA}  also implies that with constant probability the 3-Pessimistic DA  terminates and each {\it player} i, single  or couple, obtains 3 different temporary assignments, $p_i^1$,$p_i^2$ and $p_i^3$  (thus if $i$ is couple, $p_i^j$ is a pair of hospitals) and observe that $p_i^1\succ_i p_i^2\succ_i p_i^3$.

To finish the proof we argue that in every stable matching, no agent $i$ will be  assigned to a pair of hospitals less preferred to $p_i^3$.
Call  a player $i$ (a single or a couple) which gets a hospital less preferred to $p_i^3$ {\it poor}, and let $K$ be the set of poor player. Suppose that $|K|=k>0$.
For a player $i$ to be poor, at least one hospital in each  $p_i^1$,$p_i^2$ and $p_i^3$ should be taken (if $i$ is a single then all $p_i^j$ are single hospitals and all should be taken).  Since for each two players $j,l$, $\{p_j^1,p_j^2,p_j^3\}\cap\{p_l^1,p_l^2,p_l^3\}=\emptyset$ there are at least $3k$ hospitals which need to be assigned.  These hospitals cannot be assigned to
players that are not poor (since they get better choices for themselves). Since there are only $k$ poor players, with a total of up to $2k$ doctors, they cannot be assigned to all $3k$   hospitals -- a contradiction.
\end{proof}

\noindent{\it Remark:} If one assumes that doctors have constant length preference lists,  the proof is significantly simpler; indeed one can show directly  that the probability that  $h_1$ and $h_2$ are not acceptable for any doctor is constant.


\section{Conclusion}


We showed using the  SoDA algorithm that if the number of couples grows at a rate of $|C^n|= n^{1-\epsilon}$, then there exists a stable matching with probability approaching $1$.  One can argue that ``in real life'' the number of couples is indeed a linear fraction of the number of doctors, and the rate $|C^n| = n^{1- \epsilon}$ does not make sense. However, our correctness proof is only a lower bound on the performance of the algorithm, and it may
perform much better in practice.
Moreover, note that if $\epsilon$ were equal to  $O(1/\log n)$, then the number of couples was a linear fraction of the number of singles. In face, our proof shows that the random market has a stable matching with probability at least $1 - (\log n)^{O(1/\epsilon)} / n^{\Omega(\epsilon)}$,  which converges to 1 even if $\epsilon = \Omega(\log \log n/\sqrt{\log n})$, and not just when $\epsilon$ is constant.

This means that we proved that the algorithm finds a stable outcome with probability approaching 1 even when the number of couples grows like $n / 2^{\sqrt{\log n} \cdot\log\log n}$. Such growth is close to linear. Empirically it is indeed hard to distinguish between such subpolynomial factors and constant factors when there are $n = 16,000$ doctors.

A few open problems that follow from this work are the following. In Theorem \ref{thm:linear1} and its proof we used a  large excess number of hospitals to obtain the negative result. To some extent we do not expect that fewer hospitals will improve the chances of obtaining a stable matching. Figure \ref{fig:prectanges}  suggests when there are $\alpha n$ couples, the probability for the SoDA algorithm to find a stable matching decreases with $\alpha$. We conjecture that this is true in general, i.e. the probability that there exist a stable matching (not necessarily found by SoDA) is decreasing with $\alpha$.

\bibliographystyle{plainnat}
\bibliography{couples}


\end{document}